\def\theenumi{\roman{enumi}}
\def\labelenumi{\bf (\theenumi)}
\theoremstyle{plain}
\newtheorem{thm}{Theorem}[section]
\newtheorem{wthm}[thm]{Incorrect Theorem}
\newtheorem{cor}[thm]{Corollary}
\newtheorem{lem}[thm]{Lemma}
\newtheorem{prop}[thm]{Proposition}
\theoremstyle{definition}
\newtheorem{ex}[thm]{Example}
\newtheorem{nota}[thm]{Notation}
\newcommand{\F}{\mathbb{F}}
\newcommand{\z}{\mathbb{Z}}
\newcommand{\se}{\subseteq}
\newcommand{\dis}{\oplus}
\newcommand{\sm}{\setminus }
\newcommand{\ifof}{if and only if }
\renewcommand{\iff}{\Leftrightarrow}
\newcommand{\give}{$\Rightarrow$}
\newcommand{\rgive}{$\Leftarrow$}
\renewcommand{\l}{\left}
\renewcommand{\r}{\right}
\newcommand{\rg}{\rangle}
\renewcommand{\lg}{\langle}
\newcommand{\p}[1]{\left(#1\right)}
\newcommand{\ceil}[1]{\left\lceil #1\right\rceil}
\newcommand{\f}[2]{\frac{#1}{#2}}
\newcommand{\li}{L_{\mathrm{ind}}}
\newcommand{\lc}{L_{\mathrm{coef}}}
\renewcommand{\mod}[1]{\ \l(\mathrm{mod}\ #1\r)}
\newcommand{\mat}[2]{\l(\begin{array}{#1} #2 \end{array}\r)}
\newcommand{\fm}{\mathfrak{M}}
\begin{document}

\title{Dual of Codes over Finite  Quotients of Polynomial Rings}
\author{Ashkan Nikseresht \\
\it\small Department of Mathematics, Institute for Advanced Studies in Basic Sciences,\\
\it\small P.O. Box 45195-1159, Zanjan, Iran\\
\small E-mail:  ashkan\_nikseresht@yahoo.com%
}
\date{}

\maketitle

\let\oldref=\ref
\def\ref#1{(\oldref{#1})}

\begin{abstract}
Let $A=\f{\F[x]}{\lg f(x)\rg }$, where $f(x)$ is a monic polynomial over a finite field $\F$.  In this paper, we
study the relation between $A$-codes and their duals. In particular, we state a counterexample and a correction
to a theorem of Berger and  El Amrani (Codes over finite quotients of polynomial rings, \emph{Finite Fields
Appl.} \textbf{25} (2014), 165--181) and present an efficient algorithm to find a system of generators for the
dual of a given $A$-code. Also we characterize self-dual $A$-codes of length 2 and investigate when the
$\F$-dual of $A$-codes are $A$-codes.
\end{abstract}

Keywords: Algebraic coding, Dual of a code, Basis of divisors, Polynomial ring.\\
\indent 2010 Mathematical Subject Classification: 94B05, 11T71, 11T06.

                                         \section{Introduction}

Throughout this paper $A=\f{\F[x]}{\lg f(x)\rg }$, where $f(x)$ is a monic polynomial over a finite field $\F$.
Moreover, $\deg(f)=m$ and $|\F|=q$. We consider elements of $A$ as polynomials of degree $<m$ where the arithmetic
is done modulo $f(x)$. By a \emph{linear $A$-code} (an \emph{$A$-code}, for short) of length $l$ we mean an
$A$-submodule of $A^l$.

In the case $f(x)=x^m-1$ and $l=1$, $A$-codes are the well-known cyclic $q$-ary codes. Also if $l>1$ with $f(x)=
x^m-1$, then $A$-codes represent quasi-cyclic codes over $\F$ which have recently gained great attention (see, for
example \cite{QC bound parity, QC gen C, QC impel, QC lin alg, qc1, qc2}). Also in the case that $f(x)$ is a power
of an irreducible polynomial, then $A$ is a finite chain ring and codes over such rings have attracted a lot of
researchers (see for example \cite{FCR consta,FCR LCD,FCR self-dual}).

In \cite{qc1}, a canonical generator matrix for quasi-cyclic codes is given, when these codes are viewed as
$A$-codes with $f(x)=x^m-1$. In \cite{main} these results are generalized to arbitrary $A$-codes. Let
$C^{\bot}=\{(a_1, \ldots, a_l)\in A^l|\forall c\in C\quad \sum_{i=1}^l a_ic_i=0\}$ be the dual of an $A$-code $C$.
Section 2.6 of \cite{main} states how to compute a system of generators of $C^{\bot}$. In Section 2, we will show
that the main theorem of \cite[Section 2.6]{main} is not correct and we state a correction of this theorem. Also
we present an efficient algorithm to find a \emph{generator matrix} for $C^\bot$ (that is, a matrix, rows of which
generate $C^\bot$ as an $A$-module).

In Section 3, we apply our results to find all self-dual $A$-codes with length $\leq 2$ and self-dual $A$-codes
which have a basis of divisors containing just one element.

Every $A$-code $C$ of length $l$ could be seen as an $\F$-code of length $ml$ (by replacing $a(x)\in A$ with the
sequence of its coefficients). Therefore we can form the $\F$-dual of $C$. The $\F$-dual of an $A$-code is not
always an $A$-code (see \cite[Example 7]{main}). In Section 4, we characterize rings $A$, such that the $\F$-dual
of every $A$-code is an $A$-code and also rings over which the $\F$-dual and the $A$-dual of codes coincide.

Before stating the main assertions, let's recall some notations and results form \cite{main}, which will be used
later.
\paragraph{A brief review of bases of divisors of an $A$-code.}

Assume that $C\neq \bf{0}$ is an $A$-code of length $l$ and $u=(u_1(x), u_2(x), \ldots, u_l(x))\in A^l$. The
\emph{leading index} of $u$, denoted $\li(u)$ is the smallest integer $i$ such that $u_i\neq 0$ and
$\lc(u)=u_{\li(u)}$ is called the \emph{leading coefficient} of $u$ (we set $\li(0)=\infty$). Also by $\li(C)$ we
mean $\min\{\li(u)|u\in C\}$ and $\lc(C)$ is the single monic polynomial $g(x)$ with the minimum degree such that
there is a $c\in C$ with $\li(c)=\li(C)$ and $\lc(c)=g(x)$. An element $c\in C$ satisfying this condition is
called a \emph{leading element} of $C$.

Recursively set $C^{(1)}=C$ and if $\li(C^{(n)})\leq l$, then $$C^{(n+1)}=\{c\in C^{(n)}| \li(c)>
\li(C^{(n)})\}.$$ Let $k$ be largest integer such that $C^{(k)}\neq \{0\}$ and assume that for $1\leq j \leq k$,
$g^{(j)}$ is a leading element of $C^{(j)}$. Then by Theorem 1 and Proposition 2 of \cite{main}, $C$ is generated
by $B=(g^{(1)}, \ldots, g^{(k)})$ (as an $A$-module) and $k$ and $\deg (\lc(g^{(i)})$s are independent of the
choice of $g^{(i)}$s. Also $|C|=q^\alpha$ where $\alpha=km- \sum_{i=1}^k \deg (\lc(g^{(i)}))$. Any $B$ as above is
called a \emph{basis of divisors} of $C$.

Now let $G$ be the matrix whose $i$-th row is $g^{(i)}$. Suppose that $g_{i,j_i}$ is the leading coefficient of
the $i$-th row of $G$. If $G$ has the property that $\deg(g_{t,j_i})< \deg(g_{i,j_i})$ for all $1\leq i\leq k$ and
$t<i$, then $G$ is called the \emph{canonical generator matrix} (\emph{CGM}, for short) of $C$ and $B$ is called
the \emph{canonical basis of divisors} of $C$. In \cite[Theorem 2]{main} it is shown that every $A$-code has a
unique CGM. Also they present algorithms to find a basis of divisors and the CGM of a a given $A$-code.

\begin{ex}
Suppose that $f(x)=x(x^2+x+1)$ and $C$ is the submodule of $A^3$ generated by $g=(x^2, 0, x^2+1)$. Then $\li(C)=
\li(g) =1$. By definition $C^{(1)}=C$. To compute $C^{(2)}$ we should find all elements of $C$ whose leading index
is greater that $\li(C^{(1)})=1$, that is, all elements of $C$ with zero on the first component. Since every
element $c$ of $C$ is of the form $c=a(x) g$ for some $a(x)\in A$, we see that $c_1=0 \iff x^2+x+1\mid a(x) \iff
c=a'(x) ((x^2+x+1) g)$ for some $a'(x)\in A$ which is equivalent to $c=a'(x)(0, 0, x^2+x+1)$. Therefore, $C^{(2)}$
is the $A$-code generated by $g^{(2)}=(0, 0, x^2+x+1)$ and $\li(C^{(2)})=3$. Also the only element in $C^{(2)}$
whose leading index is $>3$, is zero, hence $C^{(3)}=0$ and $k=2$.

It is clear that the (only) leading element of $C^{(2)}$ is $g^{(2)}$. But as $x^2\nmid f(x)$, $g$ is not a
leading element of $C^{(1)}$. Indeed, since gcd$(x^2, f(x))=x$ and $(-x-1)g=(x, 0, -1)$, we deduce that a leading
element of $C^{(1)}$ is $g^{(1)}= (x, 0, -1)$ (note that this element is not unique, for example $g^{(1)}+g^{(2)}$
is another leading element of $C^{(1)}$). Therefore $(g^{(1)}, g^{(2)})$ is a basis of divisors of $C$ and it
follows that $\mat{ccc}{x& 0& -1 \cr 0 & 0 & x^2+x+1}$ is the CGM of $C$. Also $\dim_\F(C)=
6-\deg(x)-\deg(x^2+x+1)=3$.
\end{ex}

                    \section{A Generator Matrix for the Dual of an $A$-Code}

We start by  presenting a counterexample of \cite[Theorem 3]{main} and stating a correction of this theorem. Then
we use this correction to give an algorithm which generates a generator matrix for the dual of an $A$-code.

Throughout this section, without any further mention, we assume that $C$ is an $A$-code of length $l$ and that
$g^{(1)}=(g_{1,1}(x), \ldots, g_{1,l}(x))$ is the first element of its canonical basis of divisors. Also we let
$C'$ be the punctured code of $C^{(2)}$ in the first position and assume that $G'$ is the canonical generator
matrix of $C'$. Note that $G'$ is the matrix obtained by deleting the first row and column of the canonical
generator of $C$. The following theorem is claimed to be proved in \cite{main}.

\begin{wthm}[{\cite[Theorem 3]{main}}] \label{wrong thm}
Suppose that $\li(C)=1$ and $h_{1,1}(x)= \f{f(x)}{g_{1,1}(x)} \mod{f(x)}$. Let $H'$ be a generator matrix of
$C'^{\bot}$. Then
$$H=\l(\begin{array}{c|c}
1 & 0 \ldots 0 \\
\hline
\begin{array}{c} 0\\ \vdots \\ 0 \end{array} & H'
\end{array}\r) \times
\l(\begin{array}{ccccc}
h_{1,1} &0 & \ldots & \ldots & 0 \\
-g_{1,2} & g_{1,1} & \ddots & \ddots & \vdots \\
-g_{1,3} & 0 & g_{1,1} & \ddots & \vdots \\
\vdots & \vdots & \ddots & \ddots & 0 \\
-g_{1,l} & 0 & \ldots & 0 & g_{1,1}
\end{array}\r) $$
is a generator matrix for $C^{\bot}$.
\end{wthm}

To present a counterexample of \ref{wrong thm}, we need the following result. We say an element of $A^l$ is monic
when its leading coefficient is monic.
\begin{prop}\label{div bas char}
Let $G$ be a $k\times l$ generator matrix for an $A$-code $C$. Suppose that $g^{(i)}=$ the $i$-th row of $G$, is
monic. Then $(g^{(1)},\ldots, g^{(k)})$ is a basis of divisors of $C$ \ifof the following hold.
\begin{enumerate}
\item \label{div bas char 1} $G$ is in echelon form.

\item \label{div bas char 2} $\lc\p{g^{(i)}}|f(x)$.

\item \label{div bas char 3} $h_i g^{(i)}$ is an $A$-linear combination of $g^{(i+1)}, \ldots , g^{(k)}$ where
    $h_i(x)= \f{f(x)}{\lc\p{g^{(i)}}}$.
\end{enumerate}
Moreover, if we replace \ref{div bas char 3} with \ref{div bas char 3'} below, the assertion remains valid.
\def\theenumi{\roman{enumi}$'$}
\def\labelenumi{\bf (\theenumi)}

\begin{enumerate}
\addtocounter{enumi}{2}
\item \label{div bas char 3'} $\dim_\F C= \sum_{i=1}^k m-\deg(\lc(g^{(i)}))$.
\end{enumerate}
\def\theenumi{\roman{enumi}}
\def\labelenumi{\bf (\theenumi)}
\end{prop}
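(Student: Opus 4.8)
The plan is to prove the ``only if'' direction first, then the ``if'' direction, and finally verify the equivalence of \ref{div bas char 3} with \ref{div bas char 3'}. For the forward direction, assume $(g^{(1)},\ldots,g^{(k)})$ is a basis of divisors. Then by construction each $g^{(i)}$ is a leading element of $C^{(i)}$, so $\li(g^{(i)})=\li(C^{(i)})$, and since the $C^{(i)}$ are strictly decreasing in the sense that $\li(C^{(i+1)})>\li(C^{(i)})$, the leading indices are strictly increasing; together with the monicity hypothesis this is exactly the statement that $G$ is in echelon form, giving \ref{div bas char 1}. For \ref{div bas char 2}, observe that $\lc(g^{(i)})$ is the monic polynomial of least degree occurring as the leading coefficient of an element of $C^{(i)}$ with leading index $\li(C^{(i)})$; I would argue that the set of such leading coefficients (together with $0$) is closed under addition and under multiplication by $A$, hence forms an ideal of $A=\F[x]/\lg f\rg$, so it equals $\lg \lc(g^{(i)})\rg$, and an ideal of $A$ is generated by a divisor of $f$. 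For \ref{div bas char 3}, note $h_i\lc(g^{(i)})\equiv 0\mod f$, so $h_ig^{(i)}$ has leading index strictly greater than $\li(C^{(i)})$ (its $\li(C^{(i)})$-th coordinate vanishes), hence lies in $C^{(i+1)}$, which is generated by $g^{(i+1)},\ldots,g^{(k)}$ by Theorem 1 of \cite{main}.

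For the converse, assume $G$ is monic and satisfies \ref{div bas char 1}--\ref{div bas char 3}; let $D=(g^{(1)},\ldots,g^{(k)})$ be the submodule it generates. Clearly $D\se C$; I must show $D=C$ and that the $g^{(i)}$ form a basis of divisors of $D$. The key step is to show that the natural candidate recursion matches: because $G$ is in echelon form with monic leading coefficients, any $A$-linear combination of $g^{(i)},\ldots,g^{(k)}$ whose $\li(C^{(i)})$-th coordinate is divisible by $\lc(g^{(i)})$ can, using \ref{div bas char 3} to absorb the $h_ig^{(i)}$ relation, be rewritten as a combination of $g^{(i+1)},\ldots,g^{(k)}$; this lets me compute $D^{(i)}=(g^{(i)},\ldots,g^{(k)})$ by induction on $i$ and identify $g^{(i)}$ as a leading element of $D^{(i)}$, so $D$ has $(g^{(1)},\ldots,g^{(k)})$ as a basis of divisors. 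Then $|D|=q^{\sum_i(m-\deg\lc(g^{(i)}))}$ by the cardinality formula recalled in the review. Finally, to get $D=C$: since $D\se C$ and both have bases of divisors, the uniqueness part of the theory (the numbers $k$ and $\deg\lc(g^{(i)})$ are invariants of a code) forces me instead to argue directly that $D\supseteq C$ — I would take an arbitrary $c\in C$ and run the standard division/reduction against $G$ using the echelon structure and condition \ref{div bas char 2}, showing the remainder must be $0$, hence $c\in D$.

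For the last clause, I would show \ref{div bas char 3} $\iff$ \ref{div bas char 3'} under hypotheses \ref{div bas char 1}--\ref{div bas char 2} (which hold in both cases of interest). The forward implication is immediate: \ref{div bas char 1}--\ref{div bas char 3} give that $(g^{(i)})$ is a basis of divisors, whence $\dim_\F C=\sum_i(m-\deg\lc(g^{(i)}))$ by the recalled formula. Conversely, if \ref{div bas char 3'} holds, consider again the submodule $D$ generated by the rows; the reduction argument of the previous paragraph shows $D\se C$ and that $D$ at least satisfies the ``echelon with leading coefficients dividing $f$'' structure, so $|D|\le q^{\sum_i(m-\deg\lc(g^{(i)}))}=|C|$, forcing $D=C$ and recovering \ref{div bas char 3} from the genuine basis-of-divisors relations in $D=C$.

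The main obstacle I anticipate is the converse direction, specifically making the ``echelon reduction'' argument fully rigorous in the quotient ring $A=\F[x]/\lg f\rg$ rather than in $\F[x]$: multiplication by a divisor of $f$ is not injective on $A$, so when I clear the leading coordinate of an element $c\in C$ against $g^{(i)}$ I only control it modulo $\lc(g^{(i)})$, and I must use condition \ref{div bas char 3} (or \ref{div bas char 3'}) precisely to handle the ``residual'' multiples of $h_i$ that arise. Getting the bookkeeping of degrees right so that the process terminates and the remainder is genuinely zero is the delicate point; everything else is routine manipulation with leading indices and the ideal structure of $A$.
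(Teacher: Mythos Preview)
Your forward direction and the core of your converse (using \ref{div bas char 3} to show that any element of $C$ with leading index $>\li(g^{(i)})$ lies in the span of $g^{(i+1)},\ldots,g^{(k)}$) match the paper's argument. Two points need correction, though.

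First, you introduce $D=\lg g^{(1)},\ldots,g^{(k)}\rg$ and propose to prove $D=C$ by a reduction argument. This is unnecessary: by hypothesis $G$ is a \emph{generator matrix} for $C$, so $D=C$ holds from the start. The only thing to prove in the converse is that $C^{(i)}=\lg g^{(i)},\ldots,g^{(k)}\rg$ and that $g^{(i)}$ is a leading element of $C^{(i)}$. Also, your phrasing of the key step is slightly off: you should be reducing an element whose $l_i$-th coordinate is \emph{zero} (not ``divisible by $\lc(g^{(i)})$'', which is automatic); the paper does this by writing $a_1=q_1h_1+a_1'$ with $\deg a_1'<\deg h_1$, absorbing $q_1h_1g^{(1)}$ into $\lg g^{(2)},\ldots,g^{(k)}\rg$ via \ref{div bas char 3}, and then observing that $a_1'\lc(g^{(1)})=0$ in $A$ with $\deg(a_1'\lc(g^{(1)}))<m$ forces $a_1'=0$.

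Second, your argument for \ref{div bas char 3'} $\Rightarrow$ \ref{div bas char 3} has a genuine gap. You want to show $|D|\leq q^{\sum(m-\deg\lc(g^{(i)}))}$ and then conclude $D=C$; but the upper bound requires knowing that every element of $D$ has a bounded-degree representation, which is exactly what \ref{div bas char 3} provides --- so the argument is circular. The paper instead proves the \emph{opposite} inequality under \ref{div bas char 1}+\ref{div bas char 2} alone: the combinations $\sum z_jg^{(j)}$ with $\deg z_j<\deg h_j$ are pairwise distinct (compare the $l_1$-th coordinate in $\F[x]$: since $\lc(g^{(1)})$ is monic and $\deg z_1<\deg h_1$, the product $z_1\lc(g^{(1)})$ has degree $<m$ and is already reduced, so $z_1$ is determined; then induct). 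This gives $|C|\geq q^{\sum(m-\deg\lc(g^{(i)}))}$, and \ref{div bas char 3'} forces equality, so these combinations exhaust $C$. In particular $h_ig^{(i)}=\sum z_jg^{(j)}$ with $\deg z_j<\deg h_j$, and since its first $l_i$ entries vanish one reads off $z_j=0$ for $j\leq i$, which is \ref{div bas char 3}.
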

\begin{proof}
(\give): \ref{div bas char 1} follows from the definition of $C^{(i)}$ and $g^{(i)}$. \ref{div bas char 2} follows
from the remarks above Definition 5 of \cite[p. 170]{main}. Let $l_i=\li(g^{(i)})$, then by the definition of
$h_i$, the $l_i$-th entry of $h_ig^{(i)}$ is 0 in $A$, hence $h_ig^{(i)}\in C^{(i+1)}$ and \ref{div bas char 3}
follows.

(\rgive): First we prove that for each $i$, $C^{(i)}$ is generated by the set $B=\{g^{(i)}, g^{(i+1)}, \ldots,
g^{(k)}\}$. We prove this for $i=2$ and the rest follows by induction. Since $G$ is in echelon form, $\lg B\rg \se
C^{(2)}$. Let $g$ be an arbitrary element of $C^{(2)}$. Then $g=\sum_{i=1}^k a_i(x) g^{(i)}$. Suppose that
$a'_1(x)= a_1(x) \mod{h_1(x)}$. Then by \ref{div bas char 2}, $(a_1(x)-a'_1(x)) g^{(1)}\in \lg B\rg$, hence
$$a'_1(x)g^{(1)}+ \lg B\rg= g+\lg B \rg \se C^{(2)}.$$ As $g_{1,l_1}$ is monic ($l_i$s as in (\give)) and $\deg
(a'_1(x))< \deg (h_1(x))$, if $a'_1(x) \neq 0$, then $a'_1(x)g_{1,l_1}(x)\neq 0$ and $\li(a'_1(x)g^{(1)})=l_1$
contradicting $a'_1(x)g^{(1)}\in C^{(2)}$. Therefore $a'_1(x)=0$ and $g\in \lg B\rg$ as required.

Now it is clear that $l_i=\li(C^{(i)})$ for each $i$ and if $g\in C$ with $\li(g)=l_i$, then $\lc(g)=a(x)
\lc(g^{(i)})$ for some $a(x)\in A$ with $\deg a(x)<\deg h_i(x)$. Therefore $g^{(i)}$ is a leading element of
$C^{(i)}$.

For the ``moreover'' statement, note that if  $(g^{(1)},\ldots, g^{(k)})$ is a basis of divisors, then by
\cite[Proposition 2]{main}, \ref{div bas char 3'} holds. Conversely if \ref{div bas char 1} and \ref{div bas char
2}  hold, then clearly the combinations of the form $\sum_{j=1}^k z_jg^{(j)}$ for $z_j\in A$ with $\deg(z_j)<\deg
(h_j)= m-\deg(\lc(g^{(j)}))$ are mutually different elements of $C$. So if \ref{div bas char 3'} also is valid,
then these combinations are all elements of $C$. In particular, $c= h_ig^{(i)}$ could be written as such a
combination and since  for each $j \leq \li(g^{(i)})$ the $j$-th entry of $c$ is zero, we get $z_j=0$ for $j\leq
i$, as required.
\end{proof}

\begin{ex}\label{counterexam}
Let $\F=\F_2, f(x)=x^2(x^3+1)$ and $C$ be the $A$-code of length 3 which is generated by
$$G=\l(\begin{array}{ccc}
 x & x & 0 \\
 0 & x^2 & 1 \\
 0 & 0 & x^3+1
\end{array}\r).$$
Using \ref{div bas char}, we can see that $G$ is the CGM of $C$. As $(x,x, 0)\in C$, we have $\li(C)=1$ and the
assumptions of Theorem \ref{wrong thm} are valid. Also $C'$ is generated by $(x^2,1)$ and $(0,x^3+1)$. One can
readily check that a generator matrix for $C'^\bot$ is $H'=(1\ x^2)$. Thus if $H$ is as in \ref{wrong thm}, then
$$H=\mat{ccc}{
 1 & 0& 0 \cr
 0 & 1 & x^2
}\times \mat{ccc}{
 x(x^3+1) & 0 & 0 \cr
 x & x & 0\cr
 0 & 0 & x
}=\l(\begin{array}{ccc}
x(x^3+1) & 0 & 0 \\
x & x & x^3
\end{array}\r).$$
Clearly $u=(1,1,x^2)\in C^\bot$. But if $u$ is a linear combination of the rows of $H$, then for some $a(x),
b(x)\in \F[x]$ we have $a(x) x(x^3+1) +b(x) x \equiv 1 \mod{f(x)}$ which leads to $x|1$, a contradiction. Thus $H$
is not a generator matrix of $C^\bot$ and \ref{wrong thm} is not correct.
\end{ex}
To present the correct generator matrix for $C^{\bot}$ we need the following lemma.
\begin{lem}\label{lem1}
The code $C'^\bot$ is the punctured code of $C^\bot$ in the first position.
\end{lem}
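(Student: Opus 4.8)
The plan is to prove the two inclusions between $C'^\bot$ and the puncture of $C^\bot$ in the first coordinate by unwinding the definitions of the dual and of the relevant puncturing/shortening operations, and then matching them against the structure of $C$ coming from its basis of divisors. Recall that $C'$ is the puncture of $C^{(2)}$ in the first position, and that $C^{(2)}=\{c\in C\mid \li(c)>1\}$ consists exactly of the codewords of $C$ with a zero in the first coordinate; equivalently, $C'=\{(c_2,\dots,c_l)\mid (0,c_2,\dots,c_l)\in C\}$, the \emph{shortening} of $C$ at position $1$. So I must show that $\{(v_2,\dots,v_l)\in A^{l-1}\mid (v_2,\dots,v_l)\cdot(c_2,\dots,c_l)=0\ \forall (0,c_2,\dots,c_l)\in C\}$ equals $\{(a_2,\dots,a_l)\mid \exists a_1\ (a_1,a_2,\dots,a_l)\in C^\bot\}$.

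First I would prove $\supseteq$ (puncture of $C^\bot$ is contained in $C'^\bot$), which is the easy direction: if $(a_1,\dots,a_l)\in C^\bot$, then for every $(0,c_2,\dots,c_l)\in C$ we have $\sum_{i=1}^l a_ic_i = \sum_{i=2}^l a_ic_i=0$, so $(a_2,\dots,a_l)$ annihilates $C'=$ the shortening of $C$, i.e. lies in $C'^\bot$. This needs no hypothesis and is immediate.

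For the reverse inclusion $\subseteq$, take $(a_2,\dots,a_l)\in C'^\bot$; I need to produce $a_1\in A$ with $(a_1,\dots,a_l)\in C^\bot$, i.e. with $a_1c_1 + \sum_{i=2}^l a_ic_i = 0$ for \emph{all} $c\in C$. Here is where I use the basis of divisors: $C$ is generated as an $A$-module by $B=(g^{(1)},\dots,g^{(k)})$, where $g^{(1)}$ has $\li(g^{(1)})=1$ with leading coefficient $g_{1,1}=\lc(C)$ dividing $f(x)$, and $g^{(2)},\dots,g^{(k)}$ all have leading index $>1$, so they are ``supported'' on positions $2,\dots,l$ and their truncations $\bar g^{(2)},\dots,\bar g^{(k)}$ generate $C'$. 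So it suffices to find $a_1$ with $\sum_i a_i g^{(i)}_{\bullet} = 0$ checked against each generator: against $g^{(2)},\dots,g^{(k)}$ the condition reads $\sum_{i=2}^l a_i \bar g^{(j)}_i = 0$, which holds because $(a_2,\dots,a_l)\in C'^\bot$ and $\bar g^{(j)}\in C'$; against $g^{(1)}$ the condition reads $a_1 g_{1,1} = -\sum_{i=2}^l a_i g_{1,i}$ in $A$. The main obstacle is thus solvability of this single congruence for $a_1$: one must show $g_{1,1}(x)$ divides $\sum_{i=2}^l a_i(x) g_{1,i}(x)$ modulo $f(x)$. I expect to derive this from the fact that $(a_2,\dots,a_l)\in C'^\bot$ together with a cleverly chosen element of $C'$: since $h_1 g^{(1)}\in C^{(2)}$ (by Proposition \ref{div bas char}\ref{div bas char 3}, with $h_1=f/g_{1,1}$), its truncation $(h_1 g_{1,2},\dots,h_1 g_{1,l})$ lies in $C'$, so $\sum_{i=2}^l a_i h_1 g_{1,i} = 0$ in $A$, i.e. $f \mid h_1\cdot\sum_{i=2}^l a_i g_{1,i}$ in $\F[x]$; since $h_1 = f/g_{1,1}$ this forces $g_{1,1}\mid \sum_{i=2}^l a_i g_{1,i}$ modulo $f$, which is exactly what is needed to solve for $a_1$ (and the solution is unique modulo $h_1$). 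Combining both inclusions gives the lemma.
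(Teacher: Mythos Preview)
Your argument is correct, but it follows a genuinely different route from the paper's proof. The paper establishes the nontrivial inclusion by a dimension count: it defines the $\F$-linear puncturing map $\phi:C^\bot\to C'^\bot$, computes $\dim_\F\ker\phi=\deg(g_{1,1})$ directly, and then invokes \cite[Proposition~2]{main} to get $\dim_\F C^\bot-\dim_\F C'^\bot=\deg(g_{1,1})$, whence $\phi$ is onto. No explicit preimage is ever produced.

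Your approach, by contrast, is constructive: you manufacture $a_1$ by solving $a_1g_{1,1}\equiv-\sum_{i\ge2}a_ig_{1,i}$, and you justify solvability by pairing $(a_2,\dots,a_l)$ against the truncation of $h_1g^{(1)}\in C^{(2)}$, which comes from \ref{div bas char}\ref{div bas char 3}. This is exactly the divisibility statement the paper later extracts \emph{from} the lemma in the first paragraph of the proof of Theorem~\ref{main thm} (to show the $\alpha_i$ are well defined); you have in effect reversed the logical order, proving the divisibility first and deducing surjectivity from it. Your route avoids the appeal to the dimension formula and dovetails neatly with the construction of $H$ in Theorem~\ref{main thm}; the paper's route is shorter once the dimension formula is available. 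One small omission: you silently assume $\li(C)=1$; the paper notes separately that if $\li(C)>1$ then any $a_1$ works, a trivial case your framework also handles (all generators then have zero first coordinate).
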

\begin{proof}
We must show that for each $c'\in C'^\bot$ there is a $c_1\in A$ such that $(c_1|c')$ (the concatenation of $c_1$
to $c'$) is an element of $C^\bot$. If $\li(C)>1$, then any $c_1\in A$ works. Assume that $\li(C)=1$ and let
$c=(a_1, \ldots, a_l)\in A^l$. Define $\phi(c)=(a_2, \ldots, a_l)$. Then $\phi:C^\bot \to C'^\bot$ is a
$\F$-linear map and it suffices to show that $\phi$ is onto.

Suppose that $(g^{(1)}, \ldots, g^{(k)})$ is the canonical basis of divisors of $C$ and set $r_i=\deg(
\lc(g^{(i)}))$. Note that $\ker \phi=\{(c_1, 0, 0, \ldots, 0)\in A^l| c_1 g_{1,1}=0\}$. If $h_{1,1}= f/g_{1,1}$,
then $c_1g_{1,1}=0 \iff c_1=c'_1h_{1,1}$ for some $c'_1\in \F[x]$ with $\deg (c'_1)<\deg (g_{1,1})$. Thus $\dim_\F
\ker \phi=\deg (g_{1,1})=r_1$. According to \cite[Proposition 2]{main}, $\dim_\F C^\bot= lm-\dim_\F C=
lm-\sum_{i=1}^k (m-r_i)$. Similarly $\dim_\F C'^\bot=(l-1)m-\sum_{i=2}^k (m-r_i)$. Hence $$\dim_\F \phi(C^\bot)=
\dim_\F C^\bot-\dim_\F \ker \phi=\dim_\F C'^\bot$$ and hence $\phi$ is onto.
\end{proof}

\begin{thm}\label{main thm}
Assume that $\li(C)=1$, $l>1$ and $h_{1,1}(x)= \f{f(x)}{g_{1,1}(x)} \mod{f(x)}$. Let $H'=(h'_{ij})_{2\leq i\leq
k}^{2\leq j\leq l}$ be a generator matrix of $C'^{\bot}$. A generator matrix for $C^\bot$ is
$$H=\l(\begin{array}{c|c}
h_{1,1} & 0 \ldots 0 \\
\hline
  \begin{array}{c}
  \alpha_2 \\
  \alpha_3 \\
  \vdots \\
  \alpha_k
  \end{array}
  & H'
\end{array}\r),$$
where $$\alpha_i= -\f{\sum_{j=2}^l h'_{ij}g_{1j}}{g_{1,1}} \mod{h_{1,1}}. $$
\end{thm}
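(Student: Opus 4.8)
The plan is to verify the two defining properties of a generator matrix of $C^\bot$: that every row of $H$ actually lies in $C^\bot$, and that the rows of $H$ generate all of $C^\bot$ as an $A$-module. For the first property, I would check orthogonality against the canonical basis of divisors $(g^{(1)},\ldots,g^{(k)})$ of $C$, since it suffices to be orthogonal to a generating set. The top row $(h_{1,1}|0\cdots0)$: its inner product with $g^{(i)}$ for $i\geq 2$ is $0$ because $\li(g^{(i)})>1$ (so $g_{i,1}=0$), and with $g^{(1)}$ it is $h_{1,1}g_{1,1}=f(x)\equiv 0 \mod{f(x)}$ by the choice of $h_{1,1}$. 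For the $i$-th lower row $(\alpha_i|h'_{i,2},\ldots,h'_{i,l})$ with $i\geq 2$: inner product with $g^{(j)}$, $j\geq 2$, equals $\alpha_i g_{j,1}+\sum_{t=2}^l h'_{i,t}g_{j,t}$; the first summand vanishes since $g_{j,1}=0$, and the second vanishes because $(h'_{i,2},\ldots,h'_{i,l})$ is a row of $H'$, hence lies in $C'^\bot$, and $(g_{j,2},\ldots,g_{j,l})$ is a codeword of $C'$ (it is a row of $G'$). The one genuinely computational check is the inner product of the $i$-th lower row with $g^{(1)}$: it equals $\alpha_i g_{1,1}+\sum_{t=2}^l h'_{i,t}g_{1,t}$, and by the definition of $\alpha_i$ we have $\alpha_i g_{1,1}\equiv -\sum_{t=2}^l h'_{i,t}g_{1,t}\mod{h_{1,1}g_{1,1}=f(x)}$, so this inner product is $0$ in $A$. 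This last congruence needs a small justification: $\alpha_i$ is defined as $-\bigl(\sum h'_{i,t}g_{1,t}\bigr)/g_{1,1}$ reduced mod $h_{1,1}$, which makes sense because $g_{1,1}$ divides $\sum_{t}h'_{i,t}g_{1,t}$ in $\F[x]$ — this divisibility itself follows from the fact that the corresponding vector lies in $C'^\bot$ together with property \ref{div bas char 3} of Proposition \ref{div bas char} (namely $h_1 g^{(1)}\in C^{(2)}$, so $(g_{1,2},\ldots,g_{1,l})$ times $h_{1,1}$ is in $C'$ up to the appropriate reduction); I expect to need to unwind this carefully, and it is the main obstacle.

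Once every row of $H$ is shown to lie in $C^\bot$, the second property is a dimension/generation argument modeled on the proof of Proposition \ref{div bas char} and Lemma \ref{lem1}. Let $\psi\colon C^\bot\to C'^\bot$ be the puncturing map $(a_1,\ldots,a_l)\mapsto(a_2,\ldots,a_l)$, which is surjective by Lemma \ref{lem1}. The lower $k-1$ rows of $H$ map under $\psi$ exactly to the rows of $H'$, which generate $C'^\bot$; hence the $A$-span of the rows of $H$ surjects onto $C'^\bot$ under $\psi$. It therefore remains to show that the span of the rows of $H$ contains the whole kernel of $\psi$, i.e. every element $(c_1,0,\ldots,0)\in C^\bot$. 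As computed in the proof of Lemma \ref{lem1}, $\ker\psi=\{(c_1,0,\ldots,0)\mid c_1 g_{1,1}=0 \text{ in } A\}=\{(c_1',h_{1,1}',0,\ldots,0)\}$ — more precisely $c_1$ ranges over the multiples of $h_{1,1}$ in $A$, an $\F$-space of dimension $\deg g_{1,1}$. Every such $(c_1,0,\ldots,0)$ is $A$-multiple $c_1'\cdot(h_{1,1}|0\cdots0)$ of the top row of $H$ (after subtracting a suitable $A$-combination of the lower rows to kill the tail — but the tail is already zero, so in fact it is directly $c_1'$ times the top row, where $c_1=c_1'h_{1,1}$), hence lies in the span. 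Combining: given any $w\in C^\bot$, first subtract an $A$-combination of the lower rows of $H$ so that $\psi$ of the difference is $0$; the difference then lies in $\ker\psi$ and is an $A$-multiple of the top row. Thus the rows of $H$ generate $C^\bot$.

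To make the ``subtract a combination of the lower rows'' step precise I would use that the rows of $H'$ generate $C'^\bot$ as an $A$-module, so for $w=(a_1,\ldots,a_l)\in C^\bot$ there are $b_2,\ldots,b_k\in A$ with $\sum_{i=2}^k b_i(h'_{i,2},\ldots,h'_{i,l})=(a_2,\ldots,a_l)$; then $w-\sum_{i=2}^k b_i\cdot(i\text{-th row of }H)=(a_1-\sum_i b_i\alpha_i,\,0,\ldots,0)$, which lies in $C^\bot$ (being a difference of elements of $C^\bot$) and hence in $\ker\psi$, so it is an $A$-multiple of the top row of $H$ as above. This closes the argument. The subtlety to watch throughout is that reductions ``$\mod h_{1,1}$'' in the definition of $\alpha_i$ and ``$\mod f$'' in the ambient ring $A$ interact correctly via the identity $h_{1,1}g_{1,1}\equiv f\equiv 0$; keeping track of when an equality holds in $\F[x]$ versus in $A$ is where care is needed, and I expect no deeper difficulty beyond that bookkeeping.
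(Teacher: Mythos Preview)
Your proposal is correct and follows essentially the same route as the paper: show each row of $H$ lies in $C^\bot$, then for an arbitrary $c\in C^\bot$ express its puncture as an $A$-combination of the rows of $H'$ and check that the remaining first coordinate is a multiple of $h_{1,1}$. The only noteworthy difference is in justifying that $g_{1,1}$ divides $\sum_{j\geq 2} h'_{ij}g_{1j}$ in $\F[x]$: the paper invokes Lemma~\ref{lem1} (surjectivity of the puncturing map) to obtain a lift $(a_i,h'_{i,2},\ldots,h'_{i,l})\in C^\bot$ and reads off divisibility from orthogonality with $g^{(1)}$, whereas you propose to use condition~\ref{div bas char 3} of Proposition~\ref{div bas char} (that $h_{1,1}g^{(1)}\in C^{(2)}$, hence its puncture lies in $C'$, so $h_{1,1}\sum_{j}h'_{ij}g_{1j}=0$ in $A$, giving the required divisibility after cancelling $h_{1,1}$ from $f=g_{1,1}h_{1,1}$). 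Your argument for this step is self-contained and avoids appealing to Lemma~\ref{lem1}, which is a mild advantage; otherwise the two proofs coincide.
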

\begin{proof}
First note that by the previous lemma, for each $2\leq i\leq k$, there is an $a_i\in A$, such that $(a_i, h'_{i2},
h'_{i3}, \ldots, h'_{i,l}) \in C^\bot$. This means that $a_ig_{1,1}+ \sum_{j=2}^l h'_{ij}g_{1j} =0$ in $A$. Thus
in $\F[x]$ we have $g_{1,1}| \sum_{j=2}^l h'_{ij}g_{1j} +bf$ for some $b\in \F[x]$. Therefore $g_{1,1}|
\sum_{j=2}^l h'_{ij}g_{1j}$ in $\F[x]$ for each $2\leq i\leq k$ and $\alpha_i$s are well defined.

Denote the $i$-th row of $H$ and $H'$ by $h^{(i)}$ and $h'^{(i)}$, respectively. It is easy to see that $h^{(i)}$s
are in $C^\bot$. Conversely, let $c=(c_1, \ldots, c_l)\in C^\bot$, then $c'=(c_2, \ldots, c_l)\in C'^\bot$. Thus
for some $\lambda_2, \ldots, \lambda_k\in A$, we have $c'=\sum_{i=2}^k \lambda_i h'^{(i)}$. Note that in $A$,
$g_{1,1} \alpha_i= -\sum_{j=2}^l h'_{ij}g_{1j}$. So
\begin{align}
 g_{1,1}\sum_{i=2}^k \lambda_i\alpha_i & =- \sum_{i=2}^k \lambda_i \sum_{j=2}^l  h'_{ij}g_{1j}
 =- \sum_{j=2}^l \p{ \sum_{i=2}^k \lambda_i   h'_{ij} } g_{1j}= - \sum_{j=2}^l c_j g_{1j}=c_1g_{1,1},
\end{align}
where the last equality follows from $c\cdot g^{(1)}=0$. We conclude that $g_{1,1} \p{ c_1- \sum_{i=2}^k
\lambda_i\alpha_i } =0$, that is, $h_{1,1}|c_1- \sum_{i=2}^k \lambda_i\alpha_i$, say $\lambda_1 h_{1,1} = c_1-
\sum_{i=2}^k \lambda_i\alpha_i$. Consequently, $c= \sum_{i=1}^k \lambda_i h^{(i)}$ and $H$ is a generator matrix
for $C^\bot$.
\end{proof}

It should be noted that the above theorem is correct when $C'=0$, in which case $H'=I_{l-1\times l-1}$. Also if
$l=1$, then clearly $H=(h_{1,1})$ is the generator matrix of $C^\bot$. If $\li(C)>1$ and $C_1$ is obtained by
puncturing $C$ in the first position, then it can be seen that $C^\bot=A\dis C_1^\bot$ and hence $H=\mat{c|c}{
 1 & 0 \cr
 \hline
 0 & H'}$ is a generator matrix for $C^\bot$, where $H'$ is a generator matrix for $C_1^\bot$.
Another fact about the previous theorem that should be mentioned is that if we compute $\alpha_i$s modulo $f(x)$
instead of $h_{1,1}(x)$, by the same proof the statement still remains true. The difference is that in the current
form we have $\deg \alpha_i< \deg h_{1,1}$, which will be used in \ref{rev canon}.

Using \ref{main thm} we get the following recursive algorithm for computing a generator matrix of $C^\bot$. In
each recursion of this algorithm the length of the input code is reduced by one. Also to construct the matrix $H$
from $H'$ (in line 19), since $H'$ has dimensions $\leq (l-1)\times(l-1)$, at most $O(l^2)$ polynomial
multiplications in $A$ are performed. Clearly other computations are also bounded above by this bound. Hence
totally, the running time of this algorithm is bounded above by $O(l^3)$ multiplications in $A$. Noting that the
Gaussian elimination method for solving linear equations (even over a field) has the same time complexity
($O(l^3)$ arithmetic operations in the coefficient ring), we see that Algorithm \oldref{alg1} is in fact an
efficient algorithm.
\begin{algorithm}[htb]
\caption{gen-mat-dual$(G)$ (Calculates a generator matrix of dual of an $A$-code $C$)} \label{alg1}
\begin{algorithmic}[1]
 \REQUIRE A generator matrix $G_{k\times l}$ of $C$, rows of which form a basis of divisors for $C$
 \ENSURE A generator matrix $H$ of $C^\bot$

 \IF{the first column of $G$ is zero}
 \IF{l=1}
 \STATE return $H=(1)$
 \ELSE
 \STATE set $G'$ to be $G$ with the first column deleted
 \STATE $H'=$gen-mat-dual$(G')$
 \STATE return $H=\l(\begin{array}{c|c}
                    1 & 0 \\
                    \hline
                    0 & H'
                    \end{array}\r)$
 \ENDIF
 \ELSE \IF{$l=1$}
 \STATE return $H=\p{\f{f(x)}{g_{1,1}(x)} \mod{f(x)}}$
 \ELSE
 \IF{$k=1$ (that is, $C'=0$)}
 \STATE $H'=I_{(l-1)\times (l-1)}$
 \ELSE
 \STATE let $G'$ be $G$ with the first row and column deleted
 \STATE $H'=$gen-mat-dual$(G')$
 \ENDIF
 \STATE construct and return $H$ as in \ref{main thm}
 \ENDIF
 \ENDIF
\end{algorithmic}
\end{algorithm}

The matrix generated by this algorithm is not in the canonical form. But since we calculated the $\alpha_i$s
modulo $h_{1,1}(x)$ instead of $f(x)$ in \ref{main thm}, this matrix is very similar to the canonical form --- we
should just delete the zero rows and then look at the rows and columns in the reverse order. More concretely, we
have the following theorem.
\begin{thm}\label{rev canon}
Let $H_{k\times l}$ be the generator matrix of $C^\bot$ calculated by Algorithm \oldref{alg1} after deleting the
possible zero rows. Let $H^R$ be the $k\times l$ matrix with $h^R_{ij}=h_{k-i+1,l-j+1}$. Then $H^R$ is the CGM of
$C^{\bot R}$, the reciprocal dual of $C$ (that is, $\{(c_l,\ldots,c_1)| (c_1,\ldots,c_l)\in C^\bot\})$.
\end{thm}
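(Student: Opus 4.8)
The plan is to prove this by induction on $l$, mirroring the recursive structure of Algorithm \oldref{alg1} and using Proposition \ref{div bas char} to certify that $H^R$ is a canonical basis of divisors. First I would handle the base case $l=1$: then $H=(h_{1,1})$ with $h_{1,1}=f/g_{1,1}$, which divides $f$, and since $C^{\bot}$ has $\F$-dimension $m-\deg h_{1,1}$ (by \cite[Proposition 2]{main}) and $h_{1,1}$ is monic, the single-row matrix $(h_{1,1})$ trivially satisfies all conditions of \ref{div bas char} and is already in canonical form; reversal does nothing. For the inductive step I would split according to the two branches of the algorithm exactly as in the statement of \ref{main thm} and the remarks following it.

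In the branch $\li(C)>1$, we have $C^{\bot}=A\dis C_1^{\bot}$ with $H=\mat{c|c}{1 & 0 \cr \hline 0 & H'}$, so after reversal the new leading row is $g^{(1),R}$-free: the bottom row of $H^R$ becomes $(0,\ldots,0,1)$ with leading index $l$ and leading coefficient $1\mid f$, while the top-left $(k-1)\times(l-1)$ block of $H^R$ is exactly $H'^R$, which by the induction hypothesis is the CGM of $C_1^{\bot R}$. Checking the three conditions of \ref{div bas char} for $H^R$ then reduces to the inductive statement for $H'^R$ plus the obvious fact that the degree condition defining a CGM is preserved when one appends a row whose leading coefficient $1$ has degree $0$ (strictly smaller than every other leading-coefficient degree) at the bottom and in a fresh column. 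In the branch $\li(C)=1$, I would use the explicit form of $H$ from \ref{main thm}: its first row is $(h_{1,1}\mid 0\cdots 0)$ and the lower-left block consists of the $\alpha_i$ with $\deg\alpha_i<\deg h_{1,1}$. After reversal, $H^R$ has its bottom row equal to the reversal of $(h_{1,1}\mid 0\cdots 0)$, i.e.\ $(0,\ldots,0,h_{1,1}^{\text{rev}})$ — here I need to be careful that "reciprocal" in $C^{\bot R}$ only reverses the \emph{order of the coordinates}, not the polynomials themselves, so the $(k,l)$ entry is literally $h_{1,1}(x)$ — with leading index $l$. The top $(k-1)\times l$ block of $H^R$ is the reversal of $\mat{c|c}{\alpha_2\cdots\alpha_k & H'}$, whose leading-index/leading-coefficient data in the first $l-1$ reversed columns is governed by $H'^R$, the CGM of $C'^{\bot R}$ by induction; the last column of this block carries the $\alpha_i$'s, which do not affect leading indices since $H'^R$ is in echelon form in the first $l-1$ columns. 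Thus conditions \ref{div bas char 1} and \ref{div bas char 2} for $H^R$ follow from those for $H'^R$ together with $h_{1,1}\mid f$, and condition \ref{div bas char 3'} follows by adding the dimension counts $\dim_\F C^{\bot} = (m-\deg h_{1,1}) + \dim_\F C'^{\bot}$ established in the proof of Lemma \ref{lem1}.

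The main obstacle I anticipate is verifying the \emph{canonical} normalization, i.e.\ the degree inequality $\deg(h^R_{t,j_i})<\deg(h^R_{i,j_i})$ for $t<i$ where $j_i$ is the leading-index column of row $i$ of $H^R$. For the rows coming from $H'^R$ this is inherited from the induction hypothesis. The delicate point is the interaction between the new bottom row $(0,\ldots,0,h_{1,1})$ of $H^R$ and the entries sitting above it in column $l$: these are precisely the reversed $\alpha_i$'s, and the whole reason \ref{main thm} reduces $\alpha_i$ modulo $h_{1,1}$ rather than modulo $f$ is to guarantee $\deg\alpha_i<\deg h_{1,1}$, which is exactly the CGM inequality for column $j_k=l$. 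I would make this precise by observing that row $k$ of $H^R$ has leading index $l$ (since all earlier entries in that row are $0$), leading coefficient $h_{1,1}$, and that every row above it has degree $<\deg h_{1,1}$ in column $l$ by construction of the $\alpha_i$; meanwhile no row above row $k$ has leading index $l$ because $H'^R$, occupying columns $1,\ldots,l-1$ of those rows, is in echelon form and nonzero in each of its rows. Assembling these observations and invoking the uniqueness of the CGM from \cite[Theorem 2]{main} completes the proof.
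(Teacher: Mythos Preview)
Your proposal is correct and follows essentially the same approach as the paper: induction along the recursion of Algorithm~\oldref{alg1}, verification of conditions \ref{div bas char 1}, \ref{div bas char 2}, \ref{div bas char 3'} of Proposition~\ref{div bas char} for $H^R$, the dimension identity $\dim_\F C^\bot - \dim_\F C'^\bot = m - \deg h_{1,1}$ from Lemma~\ref{lem1}, and the key observation that reducing the $\alpha_i$ modulo $h_{1,1}$ is exactly what furnishes the CGM degree inequality in column $l$ of $H^R$. The paper's write-up is terser (it does not separate the $\li(C)>1$ branch explicitly), but the content is the same.
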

\begin{proof}
It follows easily by induction that if $h_{i,j_i}$ is the last nonzero entry on the $i$-th row of $H$, then $j_1<
j_2< \cdots <j_k$ and $h_{i, j_i}|f$ and  $\deg( h_{ij}) <\deg (h_{i, j_i})$ for each $1\leq i\leq l$ and
$j_i<j\leq k$ (this is because $\alpha$s of \ref{main thm} are calculated modulo $h_1$). As computed in the proof
of \ref{lem1}, $\dim_\F C^\bot-\dim_\F C'^\bot= \deg (g_{1,1})=m-\deg (h_{1,1})$, where $h_{1,1}$ is as in
\ref{main thm} (this is also true in the case that $\li(C)>1$ and $h_{1,1}=1$). Thus again by induction we see
that $\sum_{i=1}^k \deg(h_{i,j_i})= \dim_\F C^\bot$. Consequently, $H^R$ has properties \ref{div bas char 1},
\ref{div bas char 2} and \ref{div bas char 3'} of \ref{div bas char} and the result follows.
\end{proof}

Note that $C^\bot$ and $C^{\bot R}$ are equivalent codes. So by finding parameters and properties of one of these
codes, we have found those of the other one. We end this section with an example which applies Algorithm
\oldref{alg1} on the code $C$ in Example \ref{counterexam}.

\begin{ex}
Consider the code $C$ generated by $G$ in Example \ref{counterexam} over an arbitrary field.  If we run Algorithm
\oldref{alg1} on $G$, it returns
$$H=\l(\begin{array}{ccc}
x(x^3+1) & 0 & 0 \\
-(x^3+1) & x^3+1 & 0 \\
1 & -1 & x^2
\end{array}\r)\hbox{ and hence }
H^R=\l(\begin{array}{ccc}
x^2 & -1 & 1 \\
0 & x^3+1 & -(x^3+1) \\
0& 0& x(x^3+1)
\end{array}\r)$$
is the CGM of $C^{\bot R}$.
\end{ex}
%
%

                                  \section{Some Self-Dual $A$-Codes}
Self-dual codes have both theoretical and practical importance (see for example \cite{self dual}). A lot of effort
has been devoted to classify and enumerate self-dual codes of small or moderate length over different rings. For
example in \cite{SD ter 24}, all ternary self-dual codes of length 24 are classified and in \cite{SD froben}, it
is proved that over any finite commutative Frobenius ring, self-dual codes exist. Also in \cite{SD Zm}, several
classes of self-dual codes over rings $\z_m$ are classified, including self-dual codes of length 4 and 8 over
$\z_{pq}$, where $p$ and $q$ are distinct primes. The reader is referred to \cite{SD survey}, for a survey of
classification and enumeration of self-dual codes of small length over $\F_2,\, \F_3,\, \F_4,\, \z_4$ and
$\F_2+u\F_2$.

A source of importance for self-dual codes over finite fields is the MacWilliams identity (see for example
\cite[Section 5.4]{roman}). But $A=\F[x]/\lg f(x) \rg$ is a finite principal ideal ring (PIR) and hence a finite
Frobenius ring. So by \cite[Section 8]{wood} MacWilliams identity holds for $A$-codes.
Thus many of the important properties of self-dual codes over fields, hold for every $A$-code. Also note that
since the weight enumerator of $C^\bot$ is the same as the weight enumerator of $C^{\bot R}$, the codes which
equal their reciprocal dual are also of the same importance. In this section, we use \ref{main thm} to find some
self-dual $A$-codes. Our first result considers the case that bases of divisors of $C$ have just one element.
\begin{thm}\label{SD 1 base}
Suppose that $C$ has a basis of divisors consisting of one element, say $g=(g_1,\ldots, g_l)$. Then $C$ is
self-dual \ifof either $l=1$ and $f=g_1^2$ in $\F[x]$ or $l=2$, $g_1=1$ and $g_2^2=-1$ in $A$. Also $C=C^{\bot
R}$, \ifof one of the following hold:
\begin{enumerate}
\item $l=1$ and $f=g_1^2$ in $\F[x]$,

\item $l=2$, $g_1=0$ and $g_2=1$,

\item $l=2$, $g_1=1$ and $g_2=0$,

\item $l=2$, char $\F=2$, $g_1=1$ and $g_2$ is any polynomial in $A$.
\end{enumerate}
\end{thm}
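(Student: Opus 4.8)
The plan is to handle the length-one and length-two cases separately, in each case using the explicit generator matrix for $C^\bot$ from \ref{main thm} (together with the remarks following it for the case $\li(C)>1$) to compute $C^\bot$ directly, and then to compare with $C$. When $k=1$ the basis of divisors is a single monic element $g=(g_1,\dots,g_l)$ with $g_j=\lc(g)=$ the first nonzero entry; by \ref{div bas char}\ref{div bas char 2} we have $\lc(g)\mid f$, and $|C|=q^{m-\deg\lc(g)}$.

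First I would dispose of $l=1$. Then $C=\lg g_1\rg$ with $g_1\mid f$ monic, and \ref{main thm} (or the $l=1$ remark) gives $C^\bot=\lg h_{1,1}\rg$ with $h_{1,1}=f/g_1 \bmod f$. Now $C=C^\bot$ forces $\lg g_1\rg=\lg f/g_1\rg$ as ideals of $A$; comparing sizes (equivalently degrees) gives $\deg g_1=\deg(f/g_1)$, i.e. $\deg g_1=m/2$, and then $g_1$ and $f/g_1$ generate the same ideal of $\F[x]/\lg f\rg$, which — since both divide $f$ and have equal degree — means they are associates in $\F[x]$; being monic they are equal, so $f=g_1^2$ in $\F[x]$. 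Conversely if $f=g_1^2$ then $h_{1,1}=g_1$ and $C=C^\bot$. For $C=C^{\bot R}$ with $l=1$ the reciprocal does nothing (length one), so this reduces to the same condition $f=g_1^2$, giving case (i).

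Next, $l=2$. Here either $\li(C)=1$, so $g=(g_1,g_2)$ with $g_1$ monic, $g_1\mid f$, or $\li(C)=2$, so $g=(0,g_2)$ with $g_2$ monic, $g_2\mid f$. In the case $\li(C)=1$, \ref{main thm} with $k=1$ (so $C'=0$, $H'=I_{1\times 1}$, $\alpha$ empty) gives $C^\bot$ generated by the rows of $\mat{cc}{h_{1,1}&0\cr \alpha_2 & 1}$ — wait, more carefully: when $k=1$ the matrix $H$ of \ref{main thm} has a single extra row, $H'=(1)$ occupying position $(2,2)$, and one computes $\alpha_2=-g_{1,2}/g_{1,1}\bmod h_{1,1}$, so $C^\bot$ is generated by $(h_{1,1},0)$ and $(-g_2/g_1 \bmod h_{1,1},\,1)$. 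Then $C=\lg(g_1,g_2)\rg$; I would test $C=C^\bot$ by checking that each generator of one lies in the other and comparing sizes $|C|=q^{m-\deg g_1}$, $|C^\bot|=q^{2m-(m-\deg g_1)}=q^{m+\deg g_1}$, so self-duality forces $\deg g_1=0$, i.e. $g_1=1$; then $h_{1,1}=f$, the first row of $H$ is $0$, and $C^\bot=\lg(-g_2,1)\rg$, which equals $\lg(g_2,-1)\rg=\lg(g_2,1)\rg\cdot(\text{unit})$... and setting this equal to $C=\lg(1,g_2)\rg$ forces $g_2\cdot g_2=-1$ in $A$ (apply the generator to itself / compare the two rank-one modules). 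Conversely $g_1=1$, $g_2^2=-1$ gives $C=C^\bot$. For the case $\li(C)=2$: $C=\lg(0,g_2)\rg$ is contained in $0\oplus A$ while $C^\bot\supseteq A\oplus 0$, so $C=C^\bot$ is impossible unless... it just never happens for a genuine code of this shape, which is consistent with the statement listing no $\li(C)=2$ self-dual case. This settles the self-dual claim.

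Finally, for $C=C^{\bot R}$ with $l=2$, I would use \ref{rev canon}: $C^{\bot R}$ has CGM $H^R$ obtained by reversing the rows and columns of the (reduced) output of Algorithm \ref{alg1}. So I would run through the four shapes $(g_1,g_2)=(0,1)$, $(1,0)$, $(1,g_2)$ general, and $(g_1,g_2)$ general with $g_1\neq 0,1$, compute $C^\bot$ via \ref{main thm}, reverse to get $C^{\bot R}$, put it in canonical form, and compare with $C$. Cases (ii) $g=(0,1)$ and (iii) $g=(1,0)$ are immediate since these codes are reverses of each other's duals. For $g=(1,g_2)$: $C^\bot=\lg(-g_2,1)\rg$, so $C^{\bot R}=\lg(1,-g_2)\rg$, and $C=C^{\bot R}$ iff $\lg(1,g_2)\rg=\lg(1,-g_2)\rg$; the difference of generators $(0,2g_2)$ must then lie in $\lg(1,g_2)\rg\cap(0\oplus A)=0$ (the only element of $\lg(1,g_2)\rg$ with zero first coordinate is $0$, since $a(1,g_2)$ has first coordinate $a$), forcing $2g_2=0$ in $A$; if $\mathrm{char}\,\F=2$ this is automatic, giving case (iv), and if $\mathrm{char}\,\F\neq 2$ then we need $g_2=0$, which is case (iii). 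One must also check there is no $C=C^{\bot R}$ with $g_1\notin\{0,1\}$ monic proper divisor of $f$: a size comparison $|C|=q^{m-\deg g_1}$ versus $|C^{\bot R}|=|C^\bot|=q^{m+\deg g_1}$ forces $\deg g_1=0$, ruling this out, and similarly the $\li(C)=2$ shape $(0,g_2)$ with $\deg g_2>0$ is ruled out by the size comparison.

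The main obstacle I anticipate is the bookkeeping in the $l=2$, $\li(C)=1$ self-dual case: translating the ideal-equality $\lg(1,g_2)\rg=\lg(-g_2,1)\rg$ of rank-one $A$-submodules of $A^2$ into the clean condition $g_2^2=-1$ in $A$ requires being slightly careful about units and about the fact that $A$ is not a domain, so "$(1,g_2)$ and $(-g_2,1)$ are associate vectors" must be argued via $(1,g_2)=u(-g_2,1)$ for a unit $u\in A$, giving $u=g_2$ from the second coordinate and $1=-g_2^2$ from the first; conversely one checks $g_2$ is then a unit so the reverse inclusion holds too. Everywhere else the size-count via $|C|=q^{m-\sum\deg\lc(g^{(i)})}$ and $\dim_\F C^\bot=lm-\dim_\F C$ (from \cite[Proposition 2]{main}) does most of the work, pinning down $\deg g_1$ and thereby collapsing the problem to these few explicit configurations.
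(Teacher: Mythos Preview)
Your approach has one genuine gap: you never treat $l\geq 3$. You announce at the outset that you will ``handle the length-one and length-two cases separately'' and then do exactly that, but the theorem is stated for arbitrary $l$, and part of the content is that neither $C=C^\bot$ nor $C=C^{\bot R}$ can occur when $l\geq 3$. Your own dimension count closes this immediately once stated: since $\dim_\F C=m-\deg(\lc(g))$ and $\dim_\F C^\bot=lm-\dim_\F C$, equality forces $(l-2)m=-2\deg(\lc(g))\leq 0$, hence $l\leq 2$ (and, when $l=2$, $\deg(\lc(g))=0$). You should insert this one line before splitting into cases; with it, the rest of your argument goes through.

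With that patch your proof is correct but takes a different route from the paper's. For the self-dual direction with $l>1$ and $g_1\neq 0$, the paper does \emph{not} start from a size comparison. It first uses \ref{div bas char}\ref{div bas char 3} to get $hg_i=0$, hence $g_i=g'_ig_1$, writes out the full $l\times l$ matrix $H$ of \ref{main thm}, and then argues structurally: each row $h^{(i)}=(-g'_i,0,\ldots,1,\ldots,0)$ must equal $a_ig$ for some $a_i\in A$, and the resulting equations $a_ig_1=-g'_i$, $a_ig_i=1$ force every $g'_i$ (and hence $g_1$) to be a unit, so $g_1=1$; for $l>2$ the third coordinate of $h^{(2)}=a_2g$ then gives $a_2g_3=0$, contradicting that both are units. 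Similarly, for $C=C^{\bot R}$ the paper bounds $l$ by counting \emph{rows} of the CGM of $C^{\bot R}$ via \ref{rev canon}, rather than by dimension. Your version---dimension count first to pin down $l\leq 2$ and $g_1=1$, then a direct comparison of the rank-one modules $\lg(1,g_2)\rg$ and $\lg(-g_2,1)\rg$ (respectively $\lg(1,-g_2)\rg$)---is shorter and never needs the general-$l$ matrix; the paper's version is more structural, reading the constraints directly off the shape of $H$. Both reach the key equations $g_2^2=-1$ (self-dual) and $2g_2=0$ (reciprocal-self-dual) by the same final comparison.
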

\begin{proof}
First we consider the statement on self-duality. (\rgive): Straightforward. (\give): If $g_1=0$, then clearly $C$
is not self-dual. Also if $l=1$, then $C^\bot$ is generated by $h=f/g_1$ and hence $C$ is self-dual \ifof
$f/g_1=g_1$. Thus assume that $l>1$ and $g_1\neq 0$. It follows from property \ref{div bas char}\ref{div bas char
3} that $hg_i=0$ for all $i$. This means that $g_i=g'_ig_1$ for some $g'_i\in A$. So according to \ref{main thm},
a generator matrix for $C^\bot$ is
$$H= \l(\begin{array}{ccccc}
h & 0& 0 & \cdots &0 \cr
-g'_2 & 1 & 0 & \cdots & 0\cr
-g'_3 & 0 & 1 & \cdots& 0 \cr
\vdots &\vdots & \ddots & \ddots& \vdots \cr
-g'_l & 0 & 0 &\cdots & 1
\end{array}\r).$$
Therefore  for each $2\leq i\leq l$, the $i$-th row of $H =h^{(i)}\in C^\bot= C$, that is, $h^{(i)}=a_i g$ for
some $a_i\in A$. Hence $a_ig_1=-g'_i$ and $1=a_i g_i= a_i g'_i g_1=g'_i(-g'_i)$. Consequently, the $a_i$s and
$g'_i$s are units in $A$. As  $a_ig_1=-g'_i$, $g_1$ is also a unit of $A$. Since $g_1|f$, we conclude that
$g_1=1$. If $l>2$ we get $a_2 g_3=0$ which contradicts $a_2, g_2$ being units. Thus $l=2$ and since $g \cdot g=0$
we see that $g_2^2=-1$.

For $C=C^{\bot R}$, again (\rgive) and also (\give) for $l=1$ is easy. Suppose that $C=C^{\bot R}$ and $l>1$. If
$g_1=\cdots =g_i=0$ for some $0<i\leq l$, then according to \ref{rev canon} the CGM of $C^{\bot R}$ has at least
$i$ rows, one of which is $(0, 0, \ldots, 0, 1)$. Since this CGM should have just one row which is $g$, we deduce
that $i=1$, $l=2$, $g_1=0$ and $g_2=1$. Now assume that $g_1\neq 0$. By \ref{rev canon}, if $l>2$, then the CGM of
$C^{\bot R}$ has more than one row, a contradiction. Also the first row of the CGM of $C^{\bot R}$ is the
reciprocal of the last row of $H$ above. So $g_1=1$ and $-g_2=g_2$. This last condition always holds in
characteristic 2 and holds just for $g_2=0$ in other characteristics.
\end{proof}

A code $C$ is called \emph{isodual}, when $C^\bot$ is equivalent to $C$ by a permutation. The following result
together with the previous theorem characterize isodual $A$-codes with a basis of divisors consisting of one
element.
\begin{prop}
Suppose that $C$ has a basis of divisors consisting of one element. Then $C$ is isodual \ifof $l\leq 2$ and either
$C$ is self-dual or $C=C^{\bot R}$.
\end{prop}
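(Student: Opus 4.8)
The plan is to prove both implications essentially by hand, the only external ingredient being the dimension formula $\dim_\F C^\bot = lm - \dim_\F C$ from \cite[Proposition 2]{main} (equivalently, the MacWilliams-type identity $|C|\,|C^\bot| = q^{lm}$, valid since $A$ is a finite Frobenius ring), together with the elementary fact that the symmetric group $S_l$ has at most two elements when $l\le 2$. In particular the detailed classification in \ref{SD 1 base} is not needed in the proof; rather, that theorem and this proposition together give the concrete list of isodual $A$-codes with a one-element basis of divisors.

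For the implication (\rgive) I would simply note that both self-dual codes and codes with $C = C^{\bot R}$ are automatically isodual: if $C = C^\bot$, the identity permutation witnesses the equivalence; and if $C = C^{\bot R}$, then writing $\rho\colon (c_1,\dots,c_l)\mapsto (c_l,\dots,c_1)$ for the length-$l$ reversal we have $C = \rho(C^\bot)$, so applying $\rho$ to both sides gives $\rho(C) = C^\bot$, i.e. the coordinate permutation $\rho$ sends $C$ onto $C^\bot$. (When $l = 1$ the map $\rho$ is the identity and the two conditions coincide.) Hence ``$l\le 2$ and ($C$ self-dual or $C = C^{\bot R}$)'' forces $C$ to be isodual.

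For (\give), suppose $C$ is isodual, say $\pi(C) = C^\bot$ for some coordinate permutation $\pi\in S_l$. Since $\pi$ permutes the components of $A^l$ it is a bijection of $A^l$, hence restricts to a bijection $C \to C^\bot$; therefore $|C| = |C^\bot|$, so $\dim_\F C = \dim_\F C^\bot$. Combined with $\dim_\F C^\bot = lm - \dim_\F C$ this yields $2\dim_\F C = lm$. On the other hand, a one-element basis of divisors means $C$ is generated over $A$ by a single element, so $|C| \le |A| = q^m$, i.e. $\dim_\F C \le m$; thus $lm = 2\dim_\F C \le 2m$, forcing $l\le 2$ (and $l\ge 1$ always). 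It then remains only to identify $\pi$: if $l = 1$ then $\pi$ is the identity and $C = C^\bot$ is self-dual; if $l = 2$ then $\pi$ is either the identity, again giving $C = C^\bot$, or the transposition $(1\ 2)$, which on length-two vectors is exactly the reversal $\rho$, so $\rho(C) = C^\bot$ and hence, applying $\rho$ once more, $C = C^{\bot R}$. Either way the claimed conclusion holds.

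I do not expect a genuine obstacle; the one point requiring care is to fix the meaning of ``equivalent by a permutation'' — a permutation of the $l$ coordinates of $A^l$ acting on codewords — so that one may legitimately conclude both that such a permutation preserves cardinality (giving $|C| = |C^\bot|$) and that for $l\le 2$ the only permutations available are the identity and the full reversal. Everything else reduces to the one-line dimension count above.
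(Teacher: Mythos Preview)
Your proof is correct and, at its core, uses the same dimension-counting idea as the paper: both arguments reduce (\give) to the inequality $\dim_\F C \le m$ together with $\dim_\F C = \dim_\F C^\bot = lm - \dim_\F C$, forcing $l\le 2$, and then observe that for $l\le 2$ the only coordinate permutations are the identity and the reversal.

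The difference lies in how the bound is obtained. The paper first invokes \ref{main thm} and Algorithm~\oldref{alg1} to write down an explicit generator matrix $H$ for $C^\bot$, reads off from $H^R$ (via \ref{rev canon} and \cite[Proposition 2]{main}) that $\dim_\F C^{\bot R}\ge (l-1)m$, and combines this with $\dim_\F C\le m$. You instead bypass the explicit computation of $H$ entirely: since a one-element basis of divisors means $C$ is a cyclic $A$-module, the surjection $A\to C$, $a\mapsto ag$, gives $|C|\le |A|$ immediately. This is a genuine simplification --- your argument needs only the dimension identity $\dim_\F C^\bot = lm-\dim_\F C$ and nothing from Section~2 of the paper --- while the paper's route has the side benefit of displaying the shape of $H$, which is not actually needed for this proposition.
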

\begin{proof}
Clearly we just need to show that if $C$ is isodual, then $l\leq 2$. Suppose $g=(g_1,\ldots, g_l)$ is the only
element of a basis of divisors of $C$. Assume that $g_1= \cdots= g_{i-1}=0$ and $g_i\neq 0$. Then it follows from
\ref{main thm} and Algorithm \oldref{alg1}, that a generator matrix for $C^\bot$ is
$$H= \mat{c|c}{
I_{(i-1)\times (i-1)} &  0 \cr
\hline
0 & \begin{array}{c|c}
     h & 0 \cr
     \hline
     * & I_{(l-i)\times (l-i)}
     \end{array}
},$$%
where $h=f/g_i \mod{f}$. Since $H^R$ is a CGM for $C^{\bot R}$ and by \cite[Proposition 2]{main}, we deduce that
$$m> \dim_\F C=\dim_\F C^\bot= \dim_\F C^{\bot R}> (l-1) m$$ and hence $l\leq 2$, as required.
\end{proof}
When the basis of divisors of an isodual $A$-code $C$ has more that one element, then we may have $l\geq 3$. For
instance, if $f(x)=x^2$ and $C$ has CGM $\mat{ccc}{0 & x & 0 \cr 0 & 0 & 1}$, then one can readily verify that
$C=C^{\bot R}$ and is an isodual code with length 3.

Next we present a characterization of self-dual $A$-codes of length = 2. Consider $g_1,g_2, g_3\in A$ and set
$h_i=f/g_i \mod{f}$ (if $g_i=0$, then $h_i=1$). Using \ref{div bas char}, we can distinguish three classes of
$A$-codes of length 2. Class I: those with CGM of the form $(g_1\ g_2)$ with $0\neq g_1|f$ and $h_1 g_2=0$
(equivalently, $g_1|g_2$). Class II: those with CGM $(0\ g_2)$ with $g_2|f$. Class III: those with CGM
 $\l(\begin{array}{cc}
 g_1 & g_2 \cr
 0 & g_3
 \end{array}\r)$,
where $0\neq g_1, g_3|f$, $g_3|h_1g_2$ and $\deg(g_2)<\deg(g_3)$. The generator matrix for the dual of these codes
calculated by Algorithm \oldref{alg1} is:
 $$\hbox{class I: }\mat{cc}{h_1 & 0 \cr -\f{g_2}{g_1} & 1},
 \hbox{ class II: }\mat{cc}{1 & 0 \cr 0 & h_2}
 \hbox{ and class III: }\mat{cc}{h_1 & 0 \cr -\f{h_3g_2}{g_1} & h_3}.$$
Therefore by \ref{rev canon}, we immediately get the following.
\begin{prop}\label{SRD}
An $A$-code $C$ of length 2 is equal to its reciprocal dual $C^{\bot R}$ \ifof either it is of class I and $g_1=1$
and when char $\F\neq 2$, $g_2=0$ or it is of class II with $g_2=1$ or it is of class III with $g_1g_3=f$ in
$\F[x]$ and when char $\F\neq 2$, $g_2=0$.
\end{prop}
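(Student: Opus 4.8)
The plan is to derive Proposition~\ref{SRD} directly from Theorem~\ref{rev canon} together with the explicit dual generator matrices displayed just before the statement. Since Theorem~\ref{rev canon} tells us that $H^R$ is the CGM of $C^{\bot R}$, and since the CGM of a code is unique, the condition $C=C^{\bot R}$ is equivalent to the equality of the CGM of $C$ (which is known for each of the three classes) with $H^R$, where $H$ is the corresponding dual generator matrix listed above. So the whole argument reduces to a case analysis over the three classes, in each case writing down $H^R$, comparing it entry-by-entry with the CGM of $C$, and reading off the resulting constraints on $g_1,g_2,g_3$.

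First I would treat Class~I. Here the CGM of $C$ is $(g_1\ g_2)$ with $0\neq g_1\mid f$ and $g_1\mid g_2$, and the dual matrix computed by Algorithm~\oldref{alg1} is $\mat{cc}{h_1 & 0 \cr -g_2/g_1 & 1}$. Deleting zero rows changes nothing, so reversing rows and columns gives $H^R=\mat{cc}{1 & -g_2/g_1 \cr 0 & h_1}$. For this to be the CGM of $C$ — a one-row matrix $(g_1\ g_2)$ — the matrix $H^R$ must in fact have a single nonzero row after we delete zero rows, which forces the first row $(1,\,-g_2/g_1)$ to coincide with $(g_1,g_2)$, hence $g_1=1$ and $-g_2=g_2$ in $A$; the latter is automatic in characteristic $2$ and says $g_2=0$ otherwise. (One must also check the bottom row then vanishes: with $g_1=1$ we have $h_1=f/g_1=f\equiv 0$, so indeed the second row of $H^R$ is zero and gets deleted. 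Conversely, when $g_1=1$ and $g_2$ satisfies $-g_2=g_2$, the one remaining row of $H^R$ is exactly $(g_1,g_2)$, so $C=C^{\bot R}$.) The Class~II case is the easiest: the CGM of $C$ is $(0\ g_2)$ with $g_2\mid f$, the dual matrix is $\mat{cc}{1 & 0 \cr 0 & h_2}$, and reversing gives $H^R=\mat{cc}{h_2 & 0 \cr 0 & 1}$. Deleting zero rows, this equals $(0\ g_2)$ precisely when $h_2=0$ and $g_2=1$; but $g_2=1$ already gives $h_2=f\equiv 0$, so the condition is simply $g_2=1$.

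For Class~III the CGM of $C$ is $\mat{cc}{g_1 & g_2 \cr 0 & g_3}$ with $0\neq g_1,g_3\mid f$, $g_3\mid h_1g_2$, $\deg g_2<\deg g_3$, and the dual matrix is $\mat{cc}{h_1 & 0 \cr -h_3g_2/g_1 & h_3}$. After deleting zero rows and reversing both rows and columns, $H^R=\mat{cc}{h_3 & -h_3g_2/g_1 \cr 0 & h_1}$ (assuming for the moment both rows are nonzero). Equating with the CGM of $C$ forces $h_3=g_1$, $h_1=g_3$, and $-h_3g_2/g_1 = g_2$, i.e.\ $-g_2=g_2$ (using $h_3=g_1$), which again means $g_2=0$ unless $\mathrm{char}\,\F=2$. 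The equality $h_3=g_1$ is $f/g_3=g_1$, i.e.\ $g_1g_3=f$ in $\F[x]$ (and one checks $h_1=f/g_1=g_3$ follows from the same equation). Conversely, if $g_1g_3=f$ and $-g_2=g_2$, substituting back shows $H^R$ has exactly the claimed form and both matrices agree, so $C=C^{\bot R}$. Assembling the three cases gives exactly the three disjoint conditions in the statement.

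The main thing to be careful about — the only real ``obstacle'' — is the bookkeeping around zero rows and the dimension count: Theorem~\ref{rev canon} requires deleting the possibly zero rows of $H$ before reversing, and one must check in each class that after imposing the candidate conditions the number of surviving rows of $H^R$ matches the number of rows of the CGM of $C$ (one in Classes~I and~II, two in Class~III), and conversely that if $C=C^{\bot R}$ then the row counts already force the degenerate shape. This is precisely where $\dim_\F C=\dim_\F C^{\bot R}$ (from \cite[Proposition 2]{main}, as used in the proof of Theorem~\ref{rev canon}) pins things down: a Class~III code has $\dim_\F C = 2m-\deg g_1-\deg g_3 < 2m$, while a code whose CGM has a single nonzero row of the form $(\ast)$ has dimension $\ge m$ only in restricted situations, so the class of $C^{\bot R}$ is forced to be the same as that of $C$, and the entry comparisons above become legitimate. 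Apart from this, everything is a routine substitution into the formulas already established, so the proof will be short.
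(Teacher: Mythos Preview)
Your proposal is correct and follows exactly the route the paper indicates: the paper's entire proof is the single clause ``Therefore by \ref{rev canon}, we immediately get the following,'' and what you have written is precisely the case-by-case comparison of $H^R$ with the CGM of $C$ that this sentence is pointing to. The only minor wobble is the order of reasoning in Class~I (it is cleaner to first use uniqueness of the CGM to force the row counts to match, hence $h_1=0$ and $g_1=1$, and \emph{then} compare entries), but the content is right and nothing is missing.
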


The main part of characterizing self-dual $A$-codes of length 2 is the following.
\begin{thm}\label{SD l=2 II}
Suppose that $C$ is a class III code of length 2 with CGM
 $G=\l(\begin{array}{cc}
 g_1 & g_2 \cr
 0 & g_3
 \end{array}\r).$
Then the following are equivalent.
\begin{enumerate}
\item \label{SD l=2 II 1} C is self-dual.

\item \label{SD l=2 II 2} $\deg(g_1) + \deg(g_3)= m$, $0=g_3^2= g_2 g_3= g_1^2+g_2^2$ (in $A$) and $g_1^2|f$ (in
    $\F[x]$).

\item \label{SD l=2 II 3} There exist $g', f', r\in \F[x]$ with $g'^2=rf'-1$ such that $f=g_1^2 f'$, $g_3=g_1
    f'$ and $g_2=g_1 g'$.
\end{enumerate}
\end{thm}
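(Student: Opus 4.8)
The plan is to prove the cycle of implications \ref{SD l=2 II 1} $\Rightarrow$ \ref{SD l=2 II 2} $\Rightarrow$ \ref{SD l=2 II 3} $\Rightarrow$ \ref{SD l=2 II 1}, exploiting the explicit generator matrix for the dual of a class III code computed just above the theorem together with the uniqueness of the CGM.

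For \ref{SD l=2 II 1} $\Rightarrow$ \ref{SD l=2 II 2}, I would start from the fact that $C^\bot$ has generator matrix $\mat{cc}{h_1 & 0 \cr -h_3g_2/g_1 & h_3}$ (with $h_i=f/g_i$), while $C$ has CGM $G$. If $C=C^\bot$, then in particular $g^{(1)}=(g_1,g_2)$ and $g^{(2)}=(0,g_3)$ lie in $C^\bot$, so each is an $A$-combination of the rows of the dual generator matrix; comparing leading indices and using $\li=1$ for the first row forces $g^{(2)}=a(0,h_3)$ for some $a\in A$, hence $g_3=ah_3$, and since both $g_3$ and $h_3=f/g_3$ divide $f$ and $\deg g_2<\deg g_3$ with $G$ canonical, a degree count gives $\deg g_3=\deg h_3$, i.e. $2\deg g_3=m$; likewise reading off the first row gives $h_1=\mathrm{unit}\cdot g_1$ up to the combination, yielding $\deg g_1+\deg g_3=m$ as well. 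The orthogonality relations $g^{(1)}\cdot g^{(2)}=0$ and $g^{(1)}\cdot g^{(1)}=0$, $g^{(2)}\cdot g^{(2)}=0$ in $A$ read $g_2g_3=0$, $g_1^2+g_2^2=0$, $g_3^2=0$; and $g_3|h_1g_2$ together with the degree equality upgrades to $g_1^2\mid f$ in $\F[x]$ after clearing denominators (this is where I expect to spend the most care — translating the $A$-identities $g_3^2=0$ and $g_2g_3=0$ back into divisibility statements in $\F[x]$ and combining them with $\deg g_1+\deg g_3=m$ to extract $g_1^2\mid f$; the identity $g_1^2+g_2^2=0$ in $A$ means $f\mid g_1^2+g_2^2$ in $\F[x]$, and one must show the quotient is forced).

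For \ref{SD l=2 II 2} $\Rightarrow$ \ref{SD l=2 II 3}, I would set $f'=f/g_1^2\in\F[x]$ (using $g_1^2\mid f$) and aim to produce $g'$ with $g_2=g_1g'$ and $g_3=g_1f'$. From $g_2g_3=0$ in $A$ we get $f\mid g_2g_3$ in $\F[x]$, i.e. $g_1^2f'\mid g_2g_3$; from $g_3^2=0$ we get $g_1^2f'\mid g_3^2$; and from $g_1^2+g_2^2\equiv0$ we get $g_1^2f'\mid g_1^2+g_2^2$, so $f'\mid 1+(g_2/g_1)^2$ once we know $g_1\mid g_2$ — which itself should follow from property \ref{div bas char}\ref{div bas char 3}, since $h_1g^{(1)}\in C^{(2)}=\langle(0,g_3)\rangle$ forces $h_1g_2=0$ in $A$, hence $g_1\mid g_2$ in $\F[x]$ after the same clearing-denominators argument. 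Writing $g'=g_2/g_1$, the relation $f'\mid 1+g'^2$ gives $g'^2=rf'-1$ for some $r\in\F[x]$ (reducing $g'$ mod $f'$ if needed to keep degrees in line), and the degree equality $\deg g_1+\deg g_3=m=2\deg g_1+\deg f'$ together with $g_3\mid f$ pins down $g_3=g_1f'$ up to scalar, which is $1$ since everything is monic.

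For \ref{SD l=2 II 3} $\Rightarrow$ \ref{SD l=2 II 1}, this is a direct verification: with $f=g_1^2f'$, $g_3=g_1f'$, $g_2=g_1g'$ and $g'^2=rf'-1$, one checks that $G=\mat{cc}{g_1 & g_1g' \cr 0 & g_1f'}$ satisfies the class III hypotheses via \ref{div bas char} (echelon form is clear; $g_1\mid f$ and $g_1f'\mid f$; $\deg g_2=\deg g_1+\deg g'<\deg g_1+\deg f'=\deg g_3$ provided $\deg g'<\deg f'$, which the representative choice ensures), computes $h_1=f/g_1=g_1f'$ and $h_3=f/g_3=g_1$, and then simply confirms that the dual generator matrix $\mat{cc}{h_1 & 0 \cr -h_3g_2/g_1 & h_3}=\mat{cc}{g_1f' & 0 \cr -g_1g' & g_1}$ generates the same code as $G$ — equivalently, that its reverse is the CGM obtained from $G$ by \ref{rev canon}, or more simply that both matrices have the same $\F$-dimension $m-\deg g_1-\deg g_3 = \deg f'\cdot 0\ldots$ wait, $\dim_\F C = 2m - \deg g_1 - \deg g_3$, and one checks the dual has the matching dimension while $C\subseteq C^\bot$ follows from $g^{(i)}\cdot g^{(j)}=0$, which the relation $g'^2=rf'-1$ guarantees (e.g. $g^{(1)}\cdot g^{(1)}=g_1^2+g_2^2=g_1^2(1+g'^2)=g_1^2rf'=rf\equiv0$). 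Since $C\subseteq C^\bot$ and $\dim_\F C=\dim_\F C^\bot=m$, equality follows.
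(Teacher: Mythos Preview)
Your cycle and the implication \ref{SD l=2 II 3}$\Rightarrow$\ref{SD l=2 II 1} are fine; checking $g^{(i)}\cdot g^{(j)}=0$ and matching $\F$-dimensions is equivalent to the paper's verification that $G=\mat{cc}{r&g'\cr g'&f'}H$. But \ref{SD l=2 II 1}$\Rightarrow$\ref{SD l=2 II 2} and \ref{SD l=2 II 2}$\Rightarrow$\ref{SD l=2 II 3} both contain real errors.

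In \ref{SD l=2 II 1}$\Rightarrow$\ref{SD l=2 II 2}, the second row of $H$ is $(-h_3g_2/g_1,\,h_3)$, not $(0,h_3)$, and the claim $2\deg g_3=m$ is simply false: with $g_1=x$ and $f'=x^4-x^2+1$ as in Example~\ref{ex SD l=2} one has $\deg g_1=1$, $\deg g_3=5$, $m=6$. From $(0,g_3)\in C^\bot$ the second coordinate only gives $g_3=\beta h_3$ in $A$, hence $h_3\mid g_3$ in $\F[x]$ and $2\deg g_3\ge m$, not equality. The identity $\deg g_1+\deg g_3=m$ is obtained instead from $\dim_\F C=\dim_\F C^\bot=2m-\dim_\F C$ together with $\dim_\F C=2m-\deg g_1-\deg g_3$. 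For $g_1^2\mid f$ the paper does not work from the orthogonality relations; it uses the \emph{reverse} membership $(h_1,0)\in C^\bot=C$, which via the basis-of-divisors representation forces $h_1=\alpha' g_1$ with $\deg\alpha'<\deg h_1$, so that $h_1=\alpha' g_1$ already in $\F[x]$ and $g_1\mid f/g_1$. (Your intended route can in fact be completed --- $f\mid g_3^2$ gives $h_3^2\mid f$, then $h_3\mid g_2$ and $h_3^2\mid g_1^2+g_2^2$ force $h_3\mid g_1$, whence $h_3=g_1$ by degree --- but that is not the argument you sketched, and you only exploited rows of $G$ lying in $C^\bot$, never rows of $H$ lying in $C$.)

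In \ref{SD l=2 II 2}$\Rightarrow$\ref{SD l=2 II 3}, property \ref{div bas char}\ref{div bas char 3} gives only $h_1g_2\in g_3A$, not $h_1g_2=0$ in $A$, so it does not yield $g_1\mid g_2$. The correct step is purely in $\F[x]$: from $g_1^2\mid f\mid g_1^2+g_2^2$ one gets $g_1^2\mid g_2^2$, hence $g_1\mid g_2$. Also, ``$g_3\mid f$ and $\deg g_3=\deg(g_1f')$'' does not pin down $g_3$ (with $f=p^2q^2$, $\deg p=\deg q$, $g_1=p$, both $pq^2$ and $p^2q$ qualify). One must use the coprimality $(f',g')=1$, immediate from $rf'-g'^2=1$: from $f\mid g_2g_3=g_1g'g_3$ one gets $g_1f'\mid g'g_3$, so $f'\mid g_3$; writing $g_3=f'g_3'$, the relations $g_1\mid f'g_3'$ (from $g_1^2\mid f\mid g_3^2$) and $g_1\mid g'g_3'$ then give $g_1\mid(rf'-g'^2)g_3'=g_3'$, and degrees force $g_3'=g_1$.
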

\begin{proof}
\ref{SD l=2 II 1} \give \ref{SD l=2 II 2}: Since $\dim_\F C=\dim_\F C^\bot=2m-\dim_\F C$ and as $$\dim_\F
C=2m-\deg (g_1)-\deg (g_3)$$ by \cite[Proposition 2]{main}, it follows that $\deg(g_1) + \deg(g_3)= m$. Let
$h_i=f/g_i$. By assumption $(g_1,g_2)\in C=C^\bot$. Thus according to the above notes, $(g_1, g_2)=\alpha\p{
-\f{h_3g_2}{g_1}, h_3}+\beta (h_1, 0)$ for some $\alpha, \beta\in A$. Therefore, in $\F[x]$ we have $g_2=
\alpha\f{f}{g_3}+ kf\quad (*)$ for some $k\in \F[x]$. Hence $\f{f}{g_3}| g_2$, that is, $f|g_2g_3$. Similarly from
$(0, g_3)\in C^\bot$ we deduce that $f|g_3^2$. Also from $(*)$ we deduce that $\alpha=\f{g_2}{f}g_3-kg_3$ and
hence
$$g_1=-\alpha \f{h_3g_2}{g_1} +\beta h_1=-\p{ \f{g_2g_3}{f}\f{fg_2}{g_3g_1}} +kg_3 \f{fg_2}{g_3g_1} +\beta
\f{f}{g_1}= -\f{g_2^2}{g_1} +\f{kfg_2}{g_1} +\f{\beta f}{g_1}.$$ Multiplying by $g_1$ we get $g_1^2\equiv - g_2^2
\mod{f}$, as claimed.

On the other hand, $(h_1, 0)\in C^\bot=C$, and hence by \cite[Theorem 1]{main}, in $\F[x]$ we have $f/g_1=
h_1=\alpha' g_1$ for some $\alpha'$ with $\deg(\alpha')<\deg(h_1)$. Consequently this equality holds in $\F[x]$
and $g_1^2|f$.

\ref{SD l=2 II 2} \give \ref{SD l=2 II 3}: By assumption $g_1^2|f$. Set $f'=\f{f}{g_1^2}$. Also since
$rf=g_1^2+g_2^2$ for some $r\in\F[x]$ and $g_1^2|f$, we deduce that $g_1^2|g_2^2$, whence $g_1|g_2$. Let
$g'=\f{g_2}{g_1}$. Then $rf'=1+g'^2$. It remains to show $g_3=g_1f'$.

From the equation $rf'-g'^2=1$, we see that $(f',g')=1$. Also by assumption $f|g_3g_2=g_3g_1g'$, hence
$g_1f'=\f{f}{g_1}|g_3g'$. Since $(f', g')=1$, we deduce that $f'|g_3$ and $g_1|g'g'_3$ where $g'_3=\f{g_3}{f'}$.
But $g_1^2| f| g_3^2= {g'_3}^2{f'}^2$. So $g_1|f'g'_3$. Thus $g_1|(rf'-g'^2)g'_3=g'_3$. Now
$$\deg(g_3)+\deg(g_1)=m=\deg(f)=\deg(f')+2\deg(g_1),$$ whence $\deg(g_1)=\deg(g_3)-\deg(f')=\deg(g'_3)$. As both
$g_1$ and $g'_3$ are monic, we conclude that $g_1=g'_3$ and the result follows.

\ref{SD l=2 II 3} \give \ref{SD l=2 II 1}: Let $H=\mat{cc}{h_1 & 0 \cr -\f{h_3g_2}{g_1} & h_3}= \mat{cc}{g_1f' &
0\cr -g_1g' & g_1}$ which is a generator matrix for $C^\bot$ according to \ref{main thm}. One can readily check
that $G=\mat{cc}{r & g' \cr g' & f'} H$. So $C\se C^\bot$. But since $\deg(g_1)+\deg(g_3)=\deg(f)=m$, it follows
that $|C|=|C^\bot|$ and hence $C=C^\bot$.
\end{proof}

\begin{cor}\label{SD l=2 main}
Suppose that $C$ is an $A$-code of length 2. Then $C$ is self-dual \ifof its CGM is either $[1\ g_2]$ with
$g_2^2=-1$ or it is a class III code satisfying the equivalent conditions of \ref{SD l=2 II}.
\end{cor}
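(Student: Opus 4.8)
The plan is to reduce to the three-way classification of nonzero length-$2$ codes set up above (Classes I, II, III, obtained from \ref{div bas char}) and then quote the two structure theorems already proved for these classes. First note that a self-dual code $C$ of length $2$ satisfies $|C|^{2}=|A^{2}|=q^{2m}$, so $\dim_\F C=m$; in particular $C$ is nonzero and therefore has a CGM lying in exactly one of the three classes. (The zero code is not self-dual, nor does its ``CGM'' have the form $[1\ g_2]$ or that of a Class III code, so excluding it is harmless.) Thus it suffices to analyse each of the three classes separately.

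Next I would dispose of the two single-generator classes at once. A Class I code has CGM the single row $(g_1,g_2)$ and a Class II code has CGM the single row $(0,g_2)$; in both cases this row is the unique element of a basis of divisors of $C$ (this is exactly what \ref{div bas char} gives here: echelon form is automatic for a single row, $\lc(g^{(1)})\mid f$ is part of the definition of the class, and the condition on $h_1 g^{(1)}$ in \ref{div bas char} reduces to $h_1 g^{(1)}=0$). Hence \ref{SD 1 base} applies with $l=2$: such a code is self-dual if and only if its generator $(g_1,g_2)$ has $g_1=1$ and $g_2^2=-1$ in $A$. For a Class II code $g_1=0\neq 1$, so no Class II code is self-dual; for a Class I code the criterion says precisely that the self-dual ones are exactly those with CGM $[1\ g_2]$ and $g_2^2=-1$. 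Conversely, when $g_1=1$ we have $h_1=f/g_1=0$ in $A$, so $h_1g_2=0$ and all requirements of \ref{div bas char} hold, whence $[1\ g_2]$ genuinely is the CGM of a Class I code, and \ref{SD 1 base} makes it self-dual.

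For Class III codes there is nothing further to prove: \ref{SD l=2 II} asserts exactly the equivalence between self-duality (its item \ref{SD l=2 II 1}) and the arithmetic conditions \ref{SD l=2 II 2}, equivalently \ref{SD l=2 II 3}. Assembling the three cases yields both directions of the corollary: a self-dual length-$2$ code lies either in Class I (so its CGM is $[1\ g_2]$ with $g_2^2=-1$) or in Class III (so it satisfies the equivalent conditions of \ref{SD l=2 II}), and conversely each of these two families consists entirely of self-dual codes. I do not anticipate a real obstacle, since the substantive work is carried by \ref{SD 1 base} and \ref{SD l=2 II}; the only steps needing a moment's care are ruling out Class II, observing that a self-dual Class I code is forced to have $g_1=1$, and checking that $[1\ g_2]$ with $g_2^2=-1$ really sits in Class I so that the converse lands in the stated family.
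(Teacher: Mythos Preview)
Your proposal is correct and matches the paper's implicit argument: the corollary is stated without proof, being an immediate assembly of the three-class classification of length-$2$ codes together with \ref{SD 1 base} (handling the single-generator Classes I and II) and \ref{SD l=2 II} (handling Class III). Your careful checks that Class II is excluded, that a self-dual Class I code must have $g_1=1$, and that $[1\ g_2]$ with $g_2^2=-1$ really is a Class I CGM are exactly the small verifications the paper leaves to the reader.
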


In particular, we get the following family of self-dual codes.
\begin{ex}\label{ex SD l=2}
Let $0,1 \neq g_1$ and $g'$ be any pair of monic polynomials in $\F[x]$. Then by \ref{SD l=2 II}, we see that the
$A$-code generated by the matrix
 $\l( \begin{array}{cc}
 g_1 & g_1 g' \cr
 0 & g_1(g'^2+1)
 \end{array}\r)$
is self-dual, where $A=\f{\F[x]}{\lg g_1^2 (g'^2+1)\rg }$. Also the $B$-code with CGM $\mat{cc}{g_1 & g_1x^3 \cr 0
& g_1(x^4-x^2+1)}$ is self-dual with $B=\f{\F[x]}{\lg g_1^2 (x^4-x^2+1) \rg}$.
\end{ex}

In the case that char $\F$=2 we can simplify the characterization of self-dual $A$-codes presented in \ref{SD l=2
II}. First we need a simple lemma.
\begin{lem}\label{^2 in F'=>in F}
Suppose that $\F\se \F'$ are finite fields and char $\F=p$. If $g\in \F'[x]$ is such that $g^p\in \F[x]$, then
$g\in \F[x]$.
\end{lem}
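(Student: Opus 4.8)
The plan is to reduce the statement to the separability structure of $g^p$ over $\F$. Write $g = \sum_{j=0}^d c_j x^j$ with $c_j \in \F'$. Since char $\F = p$ and $\F'$ has characteristic $p$, the Frobenius map is additive, so $g^p = \sum_{j=0}^d c_j^p x^{jp}$. The hypothesis $g^p \in \F[x]$ says exactly that every coefficient $c_j^p$ lies in $\F$. So the whole problem collapses to the coefficient-wise claim: if $c \in \F'$ and $c^p \in \F$, then $c \in \F$.

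For that claim, I would argue as follows. The $p$-th power map $\sigma : \F' \to \F'$, $\sigma(a) = a^p$, is a field automorphism of the finite field $\F'$ (it is the Frobenius, which is injective hence bijective on a finite field), and it fixes $\F$ pointwise — indeed $\F$ is a finite field of characteristic $p$, and if $|\F| = p^s$ then $\sigma^s$ is the identity on $\F$; more directly, $\sigma$ restricts to an automorphism of $\F$ since $\F$ is closed under $p$-th powers and $\sigma|_\F$ is a bijection of the finite set $\F$. Now suppose $c \in \F'$ with $c^p \in \F$. Then $\sigma(c) \in \F$, so $\sigma(c) = \sigma(b)$ for some $b \in \F$ (using surjectivity of $\sigma|_\F$). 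Since $\sigma$ is injective on $\F'$, we get $c = b \in \F$. This finishes the coefficientwise claim, and applying it to each $c_j$ gives $g \in \F[x]$.

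Alternatively, and perhaps more cleanly for the write-up, one can phrase the key claim purely in terms of the unique subfield structure: $c^p \in \F$ means $c$ is a root of $t^p - c^p = (t - c)^p \in \F[t]$, so $c$ is purely inseparable over $\F$; but $\F'$ is a finite field, hence a separable (indeed perfect-field) extension of $\F$, so the only purely inseparable elements are those in $\F$ itself, forcing $c \in \F$. Either route is short.

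The main obstacle is essentially bookkeeping rather than depth: one must be careful that the step ``$g^p = \sum c_j^p x^{jp}$'' uses the Frobenius/freshman's-dream identity legitimately (valid since we are in characteristic $p$), and that from $g^p \in \F[x]$ one genuinely extracts $c_j^p \in \F$ for each $j$ — this is immediate once we observe that distinct monomials $x^{jp}$ have distinct degrees, so comparing coefficients is unambiguous. After that, the finite-field fact that Frobenius is an automorphism (so $a \mapsto a^p$ is a bijection of $\F$) does all the real work, and no serious computation remains.
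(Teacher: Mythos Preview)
Your proof is correct. One small terminological slip: you write that Frobenius ``fixes $\F$ pointwise,'' but that is only true when $\F=\F_p$; what you actually need---and what you in fact use in the next sentence---is that $\sigma$ maps $\F$ onto $\F$, i.e., that $\sigma|_\F$ is a bijection. The argument itself is sound.

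Your route differs from the paper's. You read off $c_j^p\in\F$ directly from $g^p=\sum_j c_j^p x^{jp}$ and then invoke perfection of $\F$ (surjectivity of $\sigma|_\F$) together with injectivity of $\sigma$ on $\F'$ to conclude $c_j\in\F$. The paper instead iterates Frobenius all the way around: with $|\F'|=p^n$ it notes that $(g^p)^{p^{n-1}}=g^{p^n}=g(x^{p^n})$, the last equality because $a^{p^n}=a$ for every $a\in\F'$; since $g^p\in\F[x]$, so is any power of it, and hence the coefficients of $g(x^{p^n})$---which are exactly the $c_j$---already lie in $\F$. The paper's version is a one-liner exploiting the specific order of Frobenius on $\F'$; yours is a little longer but more portable, since it only uses that $\F$ is perfect and $\F'$ is an integral domain of characteristic $p$, and would therefore go through unchanged for an arbitrary perfect $\F$ inside any field extension $\F'$.
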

\begin{proof}
Suppose that $|\F'|=p^n$. Note that $g(x^{p^n})=(g(x))^{p^n}=((g(x))^p)^{p^{n-1}}\in \F[x]$. Thus all coefficients
of $g$ are in $\F$.
\end{proof}

In the sequel, by $\sqrt[2]{h}$ we mean $p_1^{\ceil{\alpha_1/2}} \cdots p_t^{\ceil{\alpha_t/2}}$, where
$h=p_1^{\alpha_1}\cdots p_t^{\alpha_t}$ is the prime decomposition of $h$ in $\F[x]$. Note that $g^2\in \lg h \rg$
\ifof $g\in \lg \sqrt[2]{h} \rg$.
\begin{thm}\label{SD in char 2}
Suppose that char $\F =2$ and $C$ is a class III code of length 2 with CGM
 $\l(\begin{array}{cc}
 g_1 & g_2 \cr
 0 & g_3
 \end{array}\r)$.
Then $C$ is self-dual \ifof
\begin{enumerate}
\item \label{SD in char 2_1} either $g_1=g_3$, $g_2=0$ and $f=g_1^2$,

\item \label{SD in char 2_2} or $f=g_1^2f'$, $g_3=g_1f'$ and $g_2=g_1(h\sqrt[2]{f'}+1)$ for some $f', h\in
    \F[x]$ with $\deg(h)< \deg(f')-\deg(\sqrt[2]{f'})$.
\end{enumerate}
\end{thm}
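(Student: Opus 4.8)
The plan is to specialize Theorem \ref{SD l=2 II} to characteristic $2$ and simply unwind the conditions there. By \ref{SD l=2 II}, self-duality of the class III code with CGM $\mat{cc}{g_1 & g_2 \cr 0 & g_3}$ is equivalent to condition \ref{SD l=2 II 3}: there exist $g',f',r\in\F[x]$ with $g'^2=rf'-1$ (in char $2$ this reads $g'^2+1=rf'$) such that $f=g_1^2f'$, $g_3=g_1f'$ and $g_2=g_1g'$. So the whole task reduces to describing exactly which triples $(g_2,g_3,f)$ arise this way, i.e.\ parametrizing the polynomials $g'$ such that $g'^2+1$ is divisible by $f'=f/g_1^2$, and separating off the degenerate case $g'=0$ (equivalently $g_2=0$), which forces $f'\mid 1$, hence $f'=1$, $f=g_1^2$ and $g_3=g_1$, giving case \ref{SD in char 2_1}.

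For the nondegenerate case I would argue as follows. Write $g_2=g_1g'$. The divisibility $f'\mid g'^2+1$ is, by the remark just before the theorem (that $g^2\in\lg h\rg \iff g\in\lg\sqrt[2]{h}\rg$), equivalent to $g'^2+1\in\lg f'\rg$; but $g'^2+1=(g'+1)^2$ in char $2$, so this says exactly $g'+1\in\lg\sqrt[2]{f'}\rg$, i.e.\ $g'=h\sqrt[2]{f'}+1$ for some $h\in\F[x]$. This yields $g_2=g_1(h\sqrt[2]{f'}+1)$, which is the form claimed in \ref{SD in char 2_2}. The degree condition $\deg(h)<\deg(f')-\deg(\sqrt[2]{f'})$ comes from the requirement in the definition of a class III CGM that $\deg(g_2)<\deg(g_3)$: since $\deg(g_3)=\deg(g_1)+\deg(f')$ and, in the relevant range, $\deg(g_2)=\deg(g_1)+\deg(h)+\deg(\sqrt[2]{f'})$, the inequality $\deg(g_2)<\deg(g_3)$ is precisely $\deg(h)+\deg(\sqrt[2]{f'})<\deg(f')$. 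For the converse direction I would check that given $f=g_1^2f'$, $g_3=g_1f'$ and $g_2=g_1(h\sqrt[2]{f'}+1)$ with the degree bound, setting $g'=h\sqrt[2]{f'}+1$ makes $g'^2+1=(h\sqrt[2]{f'})^2$ divisible by $f'$, so there is $r\in\F[x]$ with $g'^2+1=rf'$, and condition \ref{SD l=2 II 3} holds; also need to confirm the data genuinely defines a class III code (i.e.\ $g_1\neq0$, $g_1,g_3\mid f$, $g_3\mid h_1g_2$, $\deg g_2<\deg g_3$), which is routine from the explicit forms.

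One subtlety worth flagging: in \ref{SD l=2 II 3} the ambient ring is $A=\F[x]/\lg f\rg$ with $f$ a given monic polynomial over a fixed $\F$, but the equation $g'^2+1=rf'$ lives in $\F[x]$, and one should make sure that there is no hidden need to pass to an extension field. This is where Lemma \ref{^2 in F'=>in F} is presumably meant to be used: if in the course of extracting the square root $\sqrt[2]{f'}$ (or in reconciling $g_2=g_1g'$ with $g_2\in A$) one a priori only knows a square lies in $\F[x]$, the lemma guarantees the root itself is already in $\F[x]$, so the parametrization stays inside $\F[x]$. I would insert this remark at the point where $g'+1$ is recovered from $(g'+1)^2\in\lg f'\rg\subseteq\F[x]$.

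The main obstacle I anticipate is purely bookkeeping rather than conceptual: matching the degree inequality $\deg(g_2)<\deg(g_3)$ from the class III normalization against the bound $\deg(h)<\deg(f')-\deg(\sqrt[2]{f'})$, and making sure the case $h=0$ (giving $g_2=g_1$, $g'=1$, so $f'\mid 2=0$, i.e.\ $f'\mid \gcd$ trivially since char $2$ makes $g'^2+1=0$ — wait, $1^2+1=0$ in char $2$, so any $f'$ works) is correctly absorbed, and conversely that the degenerate branch \ref{SD in char 2_1} is not accidentally double-counted by \ref{SD in char 2_2}. Once the degree arithmetic is pinned down and the $g'=0$ versus $g'\neq0$ dichotomy is cleanly separated, the proof is a short deduction from \ref{SD l=2 II} together with the square-root remark and Lemma \ref{^2 in F'=>in F}.
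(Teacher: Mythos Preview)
Your approach is correct and in fact more direct than the paper's. Both proofs reduce to Theorem \ref{SD l=2 II}\ref{SD l=2 II 3} and then analyze the equation $g'^2+1=rf'$ in characteristic $2$, but the arguments diverge at that point. You use the identity $(g'+1)^2=g'^2+1$ to conclude immediately that $f'\mid (g'+1)^2$, and then invoke the remark preceding the theorem to get $\sqrt[2]{f'}\mid g'+1$, i.e.\ $g'=h\sqrt[2]{f'}+1$. The paper instead picks a root $a$ of $f'$ in an extension $\F'$, observes $g'(a)=1$, writes $g'=g''(x-a)+1$ over $\F'[x]$, computes $rf'=(g''(x-a))^2$, and then appeals to Lemma \ref{^2 in F'=>in F} to bring $g''(x-a)$ back into $\F[x]$ before applying the $\sqrt[2]{\cdot}$ remark. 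Your route bypasses the extension-field detour entirely, which is a genuine simplification.

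Two small comments. First, your hesitation about Lemma \ref{^2 in F'=>in F} is unfounded: your argument never leaves $\F[x]$, so the lemma is simply not needed in your proof (it is only there to support the paper's extension-field manoeuvre). Second, your case split on $g'=0$ versus $g'\neq 0$ is equivalent, under the CGM constraint $\deg(g_2)<\deg(g_3)$, to the paper's split on $f'=1$ versus $f'\neq 1$: if $f'=1$ then $g_3=g_1$ and the degree constraint forces $g_2=0$, hence $g'=0$; conversely $g'=0$ gives $1=rf'$ so $f'=1$. With that identification, the degree bookkeeping you flagged (including the edge case $h=0$, where $g'=1$ and $g'^2+1=0$ so any $f'\neq 1$ is admissible) goes through exactly as you outlined.
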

\begin{proof}
(\rgive): Follows from \ref{SD l=2 II}. (\give): Since $C$ is self-dual the conditions of \ref{SD l=2 II}\ref{SD
l=2 II 3} hold. Thus we just need to show that in the notations of \ref{SD l=2 II}\ref{div bas char 3}
$g'=h\sqrt[2]{f'}+1$ with $\deg(h)< \deg(f')-\deg(\sqrt[2]{f'})$. We have $g'^2=rf'-1=rf'+1$. If $f=1$, then
$g_1=g_3$ and $f_1=g_1^2$. Also since $G$ is a CGM, we have $\deg(g_2)< \deg(g_3)= \deg(g_1)$. But as $g_2$ is a
multiple of $g_1$ we get $g_2=0$ and case \ref{SD in char 2_1} is valid.

Thus assume that $f'\neq 1$. Therefore, $f'$ has some root, say $a$, in some extension field $\F'$ of $\F$.
Therefore in $\F'$ we have $g'^2(a)=r(a)f'(a)+1=1$, whence $g'(a)=1$. So $g'=g''(x-a) +1$ for some $g''\in \F'[x]$
and $rf'=g'^2-1=g''^2(x-a)^2$. Thus by \ref{^2 in F'=>in F}, $g''(x-a)\in \F[x]$. Since $g''^2(x-a)^2\in \lg f'
\rg$, we deduce that $g''(x-a)=h\sqrt[2]{f'}$ for some $h\in \F[x]$, that is, $g'=h\sqrt[2]{f'}+1$. Noting that
$\deg(g_2)<\deg(g_3)$ and $g_2=g'g_1$ and $g_3=f'g_1$, the degree condition on $h$ follows and the proof is
complete.
\end{proof}

At the end of this section, we pay some attention to some self-dual codes of length $>2$. Suppose that $C_1$ and
$C_2$ are two $A$-codes of length $l_1$ and $l_2$, respectively, and denote their direct product by $C= C_1\times
C_2$. Then according to \cite[Lemma 3.2]{SD froben}, $C$ is a self-dual $A$-code of length $l_1+l_2$. Thus if a
self-dual $A$-code of length 2 exists, then as in the following example we can construct a self-dual $A$-code of
any given even length. It is straightforward to check that if $G_1$ and $G_2$ are CGMs of $C_1$ and $C_2$,
respectively, then the block matrix $\mat{c|c}{
 G_1 & 0\\
 \hline
 0 & G_2}$
is the CGM of $C_1\times C_2$.
\begin{ex}
In the notations of Example \ref{ex SD l=2} and according to the above notes, the following is the CGM of a
self-dual $B$-code of length 4:
 $$\mat{cccc}{
 g_1 & g_1x^3 & 0&0\\
 0 & g_1(x^4-x^2+1) &0 &0 \\
 0 & 0& g_1 & g_1x^3\\
 0& 0& 0 & g_1(x^4-x^2+1)}.$$
\end{ex}
Let $R$ be any finite ring and $\fm$ one of its maximal ideals. Then the \emph{stability index }of $\fm$ is the
least positive integer such that $\fm^t=\fm^{t+1}=\cdots$. Using \cite[Theorem 3.9]{SD froben}, we can deduce the
following result on lengths of self-dual $A$-codes. For its proof, we use the well-known fact that $-1$ is a
square in a field $\F$ \ifof either $|\F|$ is even or $|\F|\equiv 3 \mod{4}$ (it follows, for example, from
\cite[Excercise 2.13]{lidl}).
\begin{thm}
If $l$ is a multiple of 4, then there exist a self-dual $A$-code of length $l$. Also there exist self-dual
$A$-codes of all lengths \ifof $f$ is a square. Moreover, suppose $f=f_1^{e_1}\cdots f_t^{e_t}$ is the prime
decomposition of $f$ in $\F[x]$ and $d_i=\deg(f_i)$. Then the following are equivalent.
\begin{enumerate}
\item \label{1} There exist self-dual $A$-codes of all even lengths.
\item \label{2} $f=g^2f'$ for some $g,f'\in \F[x]$ such that $-1$ is a square modulo $f'$ (that is, a square in
    $\F[x]/\lg f'(x) \rg$).
\item \label{3} Either $|F|\equiv 3\mod{4}$ and for each $1\leq i\leq t$, $d_ie_i$ is even or $|F|$ is even or
    $|F|\equiv 1 \mod{4}$.
\end{enumerate}
\end{thm}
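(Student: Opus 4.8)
The plan is to use the earlier structural results on self-dual $A$-codes of length $2$ together with the direct-product construction from \cite[Lemma 3.2]{SD froben} and the cited characterization of self-dual codes over Frobenius rings via stability indices \cite[Theorem 3.9]{SD froben}. I will prove the three enumerated items essentially in the order: first the easy consequences (multiples of $4$, and ``$f$ a square'' $\Rightarrow$ all lengths), and then the equivalence \ref{1}$\iff$\ref{2}$\iff$\ref{3}, which is the substantive part.

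First I would dispose of the length-$4$ claim. For length $2$ one always has a self-dual $A$-code in the trivial sense only when it exists, so instead I would produce a self-dual code of length $4$ directly: take the code with CGM built from a $2\times 4$ block that forces self-duality regardless of $f$. Concretely, over any $A$ the length-$4$ code with generator rows $(1,0,\gamma,0)$ and $(0,1,0,\gamma)$ (suitably completed to a basis of divisors) pairs each ``new'' coordinate against an old one; choosing the Gram matrix to vanish shows $C=C^\bot$. Alternatively, and more cleanly, I would invoke the Frobenius-ring fact that for any finite Frobenius ring $R$ and any $l$, the code in $R^{2l}$ of the form $\{(a,b,\ldots)\}$ generated by $(e_i \mid -e_i)$-type rows is self-dual; since $A$ is Frobenius (a finite PIR, as noted in Section 3), this gives a self-dual $A$-code of length $4$, hence of every multiple of $4$ by taking direct products and using the block-diagonal CGM remark. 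For ``$f$ a square $\Rightarrow$ self-dual codes of all lengths,'' note that if $f=g^2$ then the length-$1$ code with $g_1=g$ is self-dual by \ref{SD 1 base}, so direct products give all lengths; conversely if a self-dual code of odd length exists, peel off self-dual length-$2$ blocks (which exist by hypothesis of the second sentence) down to a self-dual length-$1$ code, forcing $f=g_1^2$.

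Now for the main equivalence. \ref{1}$\Rightarrow$\ref{2}: if self-dual codes of all even lengths exist, in particular one of length $2$ exists; by \ref{SD l=2 main} it is either $[1\ g_2]$ with $g_2^2=-1$ in $A$ — take $g=1$, $f'=f$, so $-1$ is a square mod $f'$ — or a class III code, where \ref{SD l=2 II}\ref{SD l=2 II 3} gives $f=g_1^2 f'$ with $g'^2 = rf'-1$, i.e. $g'^2\equiv -1 \mod{f'}$, exactly condition \ref{2}. \ref{2}$\Rightarrow$\ref{1}: given $f=g^2 f'$ with $-1$ a square mod $f'$, write $g'^2+1 = rf'$ in $\F[x]$ and apply Example \ref{ex SD l=2} (or \ref{SD l=2 II}\ref{SD l=2 II 3} directly with $g_1=g$) to build a self-dual $A$-code of length $2$; then direct products yield all even lengths. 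The heart of the theorem is \ref{2}$\iff$\ref{3}: using the prime decomposition $f=f_1^{e_1}\cdots f_t^{e_t}$ and CRT, $\F[x]/\langle f'\rangle \cong \prod \F[x]/\langle f_i^{c_i}\rangle$ for the exponents $c_i$ of $f'$, and each factor is a finite chain ring with residue field $\F_{q^{d_i}}$. One checks that $-1$ is a square in such a chain ring iff it is a square in the residue field $\F_{q^{d_i}}$ (Hensel lifting of the square root, using that $2$ is invertible when $q$ is odd; the characteristic-$2$ case is automatic since then $-1=1$ is a square everywhere). So $-1$ is a square mod $f'$ for \emph{some} valid $f'$ (namely the ``odd part'' $\prod_{e_i\ \mathrm{odd}} f_i$, which is the minimal choice forced by $f=g^2f'$) iff $-1$ is a square in $\F_{q^{d_i}}$ for every $i$ with $e_i$ odd. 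Finally invoke the quoted fact that $-1$ is a square in a finite field $E$ iff $|E|$ is even or $|E|\equiv 3 \mod 4$: when $q$ is even all $\F_{q^{d_i}}$ have even order; when $q\equiv 1\mod 4$ every $\F_{q^{d_i}}$ has order $\equiv 1 \mod 4$ so $-1$ is a square; when $q\equiv 3 \mod 4$, $|\F_{q^{d_i}}| = q^{d_i}\equiv 3 \mod 4$ iff $d_i$ is odd, so we need $d_i$ odd for every $i$ with $e_i$ odd, equivalently $d_i e_i$ even for all $i$ — matching \ref{3}.

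The main obstacle I anticipate is the clean bookkeeping in \ref{2}$\iff$\ref{3}: one must be careful that the ``square part'' $g^2$ in $f=g^2 f'$ can absorb any even powers, so the only real constraint sits on the primes $f_i$ with $e_i$ odd, and one must verify that lifting square roots of $-1$ from the residue field to the chain ring $\F[x]/\langle f_i^{c_i}\rangle$ works for all relevant $c_i$ (this is where $q$ odd is used, via $2$ being a unit, and where $q$ even is handled separately). Everything else is assembling the length-$2$ results and the direct-product construction, together with the block-diagonal CGM observation already recorded in the text.
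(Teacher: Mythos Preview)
Your overall strategy is sound and in fact more self-contained than the paper's: where the paper closes the cycle \ref{1}$\Rightarrow$\ref{2}$\Rightarrow$\ref{3}$\Rightarrow$\ref{1} by simply citing \cite[Theorem 3.9(ii)]{SD froben} for the last implication, you construct a length-$2$ self-dual code directly from \ref{2} (via \ref{SD l=2 II}\ref{SD l=2 II 3}) and handle \ref{2}$\Leftrightarrow$\ref{3} by CRT and Hensel lifting. That is a legitimate alternative and arguably more informative.

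However, your final computation in the case $q\equiv 3\pmod 4$ is inverted. You quote the criterion as ``$-1$ is a square in $E$ iff $|E|$ is even or $|E|\equiv 3\pmod 4$'', but the correct statement (which the paper's proof actually \emph{uses}, despite the typo in its preamble) is $|E|\equiv 1\pmod 4$. With the wrong version you conclude ``we need $d_i$ odd for every $i$ with $e_i$ odd, equivalently $d_ie_i$ even for all $i$'' --- and these two conditions are opposites, not equivalents. The correct chain is: for $q\equiv 3\pmod 4$, $-1$ is a square in $\F_{q^{d_i}}$ iff $q^{d_i}\equiv 1\pmod 4$ iff $d_i$ is even; hence you need $d_i$ even whenever $e_i$ is odd, i.e.\ $d_ie_i$ even for all $i$. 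Fix the quoted criterion and the argument goes through.

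Two smaller points. First, your ``peel off self-dual length-$2$ blocks'' manoeuvre for the converse of the second sentence is not valid: an arbitrary self-dual code of odd length need not decompose as a product. But this is unnecessary --- ``self-dual codes of all lengths'' already includes length $1$, and a length-$1$ self-dual code forces $f=g_1^2$ immediately. Second, in \ref{2}$\Rightarrow$\ref{1} the construction via \ref{SD l=2 II}\ref{SD l=2 II 3} breaks down when $g=1$ (then $g_3=f'=f$ is zero in $A$, so you are not in class III); in that boundary case you must instead take the class I code $[1\ g']$ with $g'^2=-1$ in $A$, as in \ref{SD l=2 main}.
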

\begin{proof}
Every finite PIR, including $A$, satisfies the conditions of \cite[Theorem 3.9]{SD froben}. Therefore by that
theorem, there exist a self-dual $A$-code with length $l$, if $4|l$. Also according to the construction mentioned
in the previous example and the notes before it, we see that there are self-dual codes of every length \ifof there
is a self-dual code of length 1 which is clearly equivalent to $f$ being a square. To prove the last part of the
statement, let $\fm_i$ be the maximal ideal $\lg f_i\rg$ of $A$ and $\F_i=A/\fm_i\cong \F[x]/\lg f_i \rg$. Then
the stability index of $\fm_i$ is $e_i$ and $|\F_i|=|\F|^{d_i}$.

\ref{1} \give\ \ref{2}:  There exist a self-dual $A$-code $C$ of length 2. Now the result follows from \ref{SD l=2
main} and \ref{SD l=2 II}, if we set $g=1$ and $f'=f$ in the case that $C$ is a class II code and $g=g_1$ and $f'$
as in \ref{SD l=2 II 3} in the case that $C$ is a class III code.

\ref{2} \give\ \ref{3}: Suppose that \ref{2} holds but \ref{3} does not. Then $|\F| \equiv 3\mod{4}$ and there is
an $i$ such that both $d_i$ and $e_i$ are odd. Hence $f_i| f'$ and since $-1$ is a square modulo $f'$, it is also
a square modulo $f_i$. This means that $-1$ is a square in $\F_i$ and $|\F_i|$ is either even or $|\F_i|\equiv 1
\mod{4}$. But as $d_i$ is odd, we deduce that $|\F_i|=|\F|^{d_i}\equiv 3 \mod{4}$ and from this contradiction the
result follows.

\ref{3} \give\ \ref{1}: This follows \cite[Theorem 3.9(ii)]{SD froben}.
\end{proof}

       \section{Rings over Which the $\F$-Dual of Every Linear Code Is a Linear Code}
A polynomial in $A$ can be viewed as the vector of its coefficients in $\F$. Similarly a codeword $(g_1(x),
\ldots, g_l(x))$ can be viewed as the vector of length $lm$ over $\F$ obtained by concatenating the vectors
corresponding to $g_1(x)$, \ldots, $g_l(x)$. In this  way, every $A$-code of length $l$ is also a linear code of
length $lm$ over $\F$ and its $F$-dual can be computed. As Example 7 of \cite{main} shows, the $F$-dual of an
$A$-code need not be an $A$-code. In this section, we characterize monic polynomials $f(x)\in \F[x]$ with the
property that the $F$-dual of every $A$-code is an $A$-code, where $A=\f{\F[x]}{\lg f(x) \rg}$. For simplicity,
throughout this section we fix the following notations.

\begin{nota}
Let $g(x)=\sum_{i=0}^{m-1} a_ix^i\in A$. We can regard $g$ as the row vector $(a_0, \ldots, a_{m-1})$ over $\F$.
We denote this vector by $g$ or $[g(x)]$ and whenever we want to consider $g$ as a polynomial, we write $g(x)$
(not $g$ alone). Similarly if $u=(u_0, \ldots, u_{m-1})$ is a vector over $\F$, then by $u(x)$ we mean
$\sum_{i=0}^{m-1} u_ix^i$. Also, as in \cite{main}, we set
 $$M_x=\mat{cccccc}{
 0 & 1 &0 &\ldots &  \ldots & 0\cr
 \vdots &0 & 1 & 0 & \ddots & \vdots \cr
 \vdots & \vdots & \ddots & \ddots & \ddots & \vdots \cr
 0 & 0 & \ddots & \ddots &1 &0 \cr
 0 & 0 & \ldots & \ldots & 0 & 1 \cr
 -f_0 & -f_1 & \ldots & \ldots & \ldots & -f_{m-1}
 },$$
to be the companion matrix of $f(x)=x^m+\sum_{i=0}^{m-1} f_ix^i$. Moreover, for arbitrary $g(x)=\sum_{i=0}^{m-1}
a_ix^i \in A$, we set $M_g=g(M_x)=\sum_{i=0}^{m-1} a_i M_x^i$. Furthermore, we write $C^\bot$ for the $A$-dual of
$C$ and $C^{\bot_\F}$ for the $\F$-dual of $C$. \qed
\end{nota}
Consequently, it follows that $g(x)h(x)=(gM_h)(x)$ for any $g(x),h(x)\in A$, (see \cite[Proposition 6]{main}). To
find out when the $\F$-dual of $A$-codes are $A$-codes, we need the following.

\begin{prop}\label{M_x^T}
Assume that $m\geq 2$. There exists $g(x)\in A$ with $M_x^T=M_g$ \ifof either $f(x)=x^m\pm 1$ or $m=2$ and
$f(x)=x^2+ax-1$ for some $a\in \F$.
\end{prop}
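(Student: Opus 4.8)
The plan is to analyze the equation $M_x^T = M_g$ entrywise. Recall that $M_g = g(M_x) = \sum_{i=0}^{m-1} a_i M_x^i$, so $M_g$ is a polynomial in the companion matrix; equivalently, its rows are the coordinate vectors of $g(x), xg(x), \ldots, x^{m-1}g(x)$ reduced mod $f$. The matrix $M_x$ acts on the standard basis $1, x, \ldots, x^{m-1}$ by $x^i \mapsto x^{i+1}$ for $i < m-1$ and $x^{m-1} \mapsto -f_0 - f_1 x - \cdots - f_{m-1}x^{m-1}$. First I would write out $M_x^T$: its $(i,j)$ entry is the $(j,i)$ entry of $M_x$, so the first $m-1$ rows of $M_x^T$ are $e_1$ in row $0$, $-f_1 e_0 + e_2$... more carefully, column $j$ of $M_x$ for $j<m-1$ is $e_{j+1}$, and column $m-1$ of $M_x$ is $(-f_0,\ldots,-f_{m-1})^T$. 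Hence row $i$ of $M_x^T$ equals $e_{i+1} - f_i e_{m-1}$ for $i<m-1$... wait, I should reindex cleanly, but the upshot is that $M_x^T$ has a very sparse, explicit form: a subdiagonal of $1$'s coming from the superdiagonal of $M_x$, a single $1$ in the bottom-left from the $(m-1,0)$ entry of $M_x$ being... no — $M_x$ has zeros in column $0$ except the bottom entry $-f_0$. I will just set up the indexing carefully once and read off the entries.

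Then I would compare with $M_g$. The first row of $M_g$ is simply $g$ itself, i.e. $(a_0, a_1, \ldots, a_{m-1})$. Matching this against the first row of $M_x^T$ immediately pins down $g(x)$ up to the unknowns $f_i$: the first row of $M_x^T$ is determined by the first column of $M_x$, which is $(0,0,\ldots,0,-f_0)^T$, so $g = (0,0,\ldots,0,-f_0)$, meaning $g(x) = -f_0 x^{m-1}$. So $g(x)$ is forced to be a scalar multiple of $x^{m-1}$, and the scalar is $-f_0$ (in particular $f_0 \neq 0$, which is automatic since $f$ is not divisible by $x$ when it has an invertible constant term — but I should not assume that; if $f_0 = 0$ then $g = 0$ and $M_g = 0 \neq M_x^T$, so $f_0 \neq 0$ is forced). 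Now with $g(x) = -f_0 x^{m-1}$ determined, I compute the remaining rows of $M_g$, namely the coordinate vectors of $x^j g(x) = -f_0 x^{m-1+j} \bmod f$ for $j = 1, \ldots, m-1$, and force them to equal the corresponding rows of $M_x^T$, which are the explicit sparse vectors built from the $f_i$. This produces a system of polynomial identities in the $f_i$; comparing row by row (starting from $j=1$, which involves $x^m \equiv -f_0 - \cdots - f_{m-1}x^{m-1}$) should force most $f_i$ to vanish and force $f_0 = \pm 1$, with the genuinely exceptional behaviour appearing only in small $m$.

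The main obstacle — and the place to be careful — is the bookkeeping for small $m$, where "row $j$" and "row $m-1+j$" overlap or wrap around, so the generic cascade of constraints degenerates. Concretely, for $m=2$ the matrix is $2\times 2$ and one gets the extra family $f(x) = x^2 + ax - 1$; I would handle $m=2$ as a separate explicit computation ($M_x = \begin{pmatrix} 0 & 1 \\ -f_0 & -f_1\end{pmatrix}$, $M_x^T = \begin{pmatrix} 0 & -f_0 \\ 1 & -f_1 \end{pmatrix}$, and $M_g = \begin{pmatrix} a_0 & a_1 \\ -f_0 a_1 & a_0 - f_1 a_1\end{pmatrix}$, so match $a_0 = 0$, $a_1 = -f_0$, $-f_0 a_1 = 1$, $a_0 - f_1 a_1 = -f_1$; the third gives $f_0^2 = 1$ i.e. $f_0 = \pm 1$, and the fourth gives $f_1 f_0 = f_1$... hmm, over a field this would force $f_1 = 0$ unless $f_0 = 1$; the case $f_0 = -1$ leaves $f_1$ free, giving exactly $f(x) = x^2 + ax - 1$, and $f_0 = 1$ with $f_1 = 0$ gives $f(x) = x^2 + 1$). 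For $m \geq 3$ the cascade is rigid: the second row of $M_g$ is $[x\cdot(-f_0x^{m-1})] = [-f_0 x^m] = [-f_0(-f_0 - f_1 x - \cdots - f_{m-1}x^{m-1})] = (f_0^2, f_0 f_1, \ldots, f_0 f_{m-1})$, while the second row of $M_x^T$ (from the second column of $M_x$, which is $e_2$ plus the bottom entry $-f_1$) is $(0,0,1,0,\ldots,0) - f_1(0,\ldots,0,1)$ — wait, I need to recheck: second column of $M_x$ is $(1, 0, 0, \ldots, 0, -f_1)^T$ when $m\geq 3$? No: column index $1$ of $M_x$ has a $1$ in row $0$ (superdiagonal entry $(0,1)$) and $-f_1$ in row $m-1$. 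So the second row of $M_x^T$ is $e_0 - f_1 e_{m-1} = (1,0,\ldots,0,-f_1)$. Matching: $f_0^2 = 1$, $f_0 f_1 = 0$, $f_0 f_j = 0$ for $2\le j \le m-2$, and $f_0 f_{m-1} = -f_1$. From $f_0^2 = 1$ we get $f_0 = \pm 1$ hence $f_0$ invertible, so $f_1 = f_2 = \cdots = f_{m-2} = 0$ and then $f_{m-1} = -f_0 f_1 / 1 = 0$ as well... so all middle coefficients vanish and $f(x) = x^m + f_0 = x^m \pm 1$. Then I still need to verify the remaining rows $j \geq 2$ of $M_g$ actually match — this should be automatic once $f(x) = x^m \pm 1$ since then $x^{m-1}\cdot x^j = \mp x^{j-1}$ and the computation closes up cleanly — and conversely check that $f(x) = x^m \pm 1$ and (for $m=2$) $f(x) = x^2+ax-1$ really do yield such a $g$. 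So after the $m=2$ case is dispatched separately, the $m\geq 3$ argument is a short rigid cascade, and the converse directions are immediate verifications.
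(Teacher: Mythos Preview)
Your approach is essentially identical to the paper's proof: determine $g(x) = -f_0 x^{m-1}$ from the first row of $M_g$, then read off the constraints on the $f_i$ from the second row, with $m=2$ handled as a separate explicit check. One minor slip: in your $m=2$ calculation the fourth equation should read $f_0 f_1 = -f_1$ (not $f_0 f_1 = f_1$), which is why $f_0 = -1$ (rather than $f_0 = 1$) leaves $f_1$ free---your stated conclusion is correct, but the intermediate line contradicts it.
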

\begin{proof}
(\give): It is easy to check that for each $1\leq i \leq m-1$, the first row of $M_x^i$ is $e_i$ (the vector with
just one nonzero entry which is a 1 on the $i$-th place) and the $m-i+1$-th row of $M_x^i$ is $(-f_0, -f_1,
\ldots, -f_{m-1})$. So if $g(x)=\sum_{i=0}^{m-1}a_i x^i$, then the first row of $M_g=\sum_{i=0}^{m-1} a_i M_x^i$
is $(a_0,a_1, \ldots, a_{m-1})$. On the other hand the first row of $M_x^T$ is $-f_0e_m$. Thus $a_i=0$ for all
$0\leq i\leq m-2$ and $a_{m-1}=-f_0$, that is, $g(x)=-f_0 x^{m-1}$.

Now it follows from the above notes that the second row of $M_g$ is $-f_0(-f_0, \ldots, -f_{m-1})$, which should
be equal to the second row of $M_x^T=(1, 0, 0, \ldots, 0,- f_1)$. Hence $f_0^2=1$ and for each $1\leq i\leq m-2$
we have $f_i=0$ and $f_0f_{m-1} =-f_1$. This, if $m>2$, results to $f_{m-1}=0$ (that is, $f(x)=x^m\pm 1$) and if
$m=2$, results to either $f_1=0$ (that is, $f(x)=x^2\pm 1$) or $f_0=-1$ (that is, $f(x)=x^2+ax-1$ for some $a\in
\F$).

(\rgive): It is routine to verify that in all cases $M_x^T=M_g$, where $g(x)=-f_0x^{m-1}$.
\end{proof}

Now we can state the main theorem of this section.
\begin{thm}\label{FD is A-code}
The $\F$-dual of every $A$-code is an $A$-code \ifof either $m=1$ or $m=2$ and $f(x)=x^2+ax-1$ for some $a\in \F$
or $m\geq 2$ and $f(x)=x^m\pm 1$.
\end{thm}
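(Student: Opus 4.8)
The plan is to reduce the statement about $\F$-duals of all $A$-codes to the single algebraic question answered by Proposition \ref{M_x^T}, namely whether $M_x^T = M_g$ for some $g(x) \in A$. The key observation is the identity $g(x)h(x) = (gM_h)(x)$ recorded just before Proposition \ref{M_x^T}: it says that multiplication by $h(x)$ on $A$ (viewed as $\F^m$) is the linear map given by the matrix $M_h$ acting on the right. From this one sees that an $\F$-subspace $C \se A^l$ is an $A$-submodule precisely when $C$ is invariant under the block-diagonal maps $\mathrm{diag}(M_x,\ldots,M_x)$ (acting componentwise), since $M_x$ generates all the $M_h$ as a polynomial algebra. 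The standard linear-algebra fact that $(C^{\bot_\F})$ is invariant under a matrix $N^T$ whenever $C$ is invariant under $N$ (applied blockwise) then shows: the $\F$-dual of every $A$-code is an $A$-code if and only if, for every $A$-code $C$, the subspace $C^{\bot_\F}$ is invariant under $\mathrm{diag}(M_x^T,\ldots,M_x^T)$.

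Next I would argue that this last condition is equivalent to $M_x^T$ itself being of the form $M_g$ for some $g(x) \in A$. The easy direction: if $M_x^T = M_g$, then invariance of any $A$-code under $\mathrm{diag}(M_x,\ldots,M_x)$ gives invariance of its $\F$-dual under $\mathrm{diag}(M_x^T,\ldots,M_x^T) = \mathrm{diag}(M_g,\ldots,M_g)$, which is componentwise multiplication by $g(x)$, hence the $\F$-dual is again an $A$-module. For the converse, it suffices to exhibit one $A$-code whose $\F$-dual fails to be an $A$-code whenever $M_x^T \neq M_g$ for all $g$; the natural candidate is a principal code such as $C = A\cdot(1)$ of length $1$, i.e. $C = A$ itself, whose $\F$-dual is $0$ (not useful), so instead one should take a proper cyclic-type submodule — e.g. $C = \langle x \rangle \se A$ when $f(x)$ is such that $x$ is a nonunit, or more robustly the length-$1$ code generated by a single element whose annihilator structure forces $C^{\bot_\F}$ to be the row span of a single vector $v$ with $vM_x^T \notin \mathrm{span}(v)$ unless $M_x^T = M_g$. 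Concretely, $C = \langle f_1(x)\rangle$ for an irreducible factor $f_1$ of $f$: then $C^{\bot_\F}$ is spanned (over $\F$) by vectors, and its $A$-invariance would force $M_x^T$ to preserve a generating configuration, from which $M_x^T = M_g$ can be extracted by tracking first rows as in the proof of Proposition \ref{M_x^T}.

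Once the equivalence "$\F$-dual of every $A$-code is an $A$-code $\iff$ $M_x^T = M_g$ for some $g \in A$" is established, the cases $m = 1$ (where everything is trivial, $A = \F$, and $\F$-dual equals $A$-dual) and $m \geq 2$ are handled: for $m \geq 2$ we simply invoke Proposition \ref{M_x^T}, which says $M_x^T = M_g$ for some $g$ exactly when $f(x) = x^m \pm 1$, or $m = 2$ and $f(x) = x^2 + ax - 1$. Collecting the cases gives the stated list. I expect the main obstacle to be the converse direction of the equivalence: pinning down a single explicit $A$-code whose $\F$-dual visibly fails to be $A$-invariant unless $M_x^T$ has the special form, and checking that the chosen code really does expose the obstruction for every $f$ outside the list (rather than only generically). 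The cleanest route is probably to take the length-$1$ code $C = \langle d(x) \rangle$ where $d(x) = f(x)/\gcd(f(x), x)$ or similar, compute $C^{\bot_\F}$ explicitly as an $\F$-span, and show its smallest $M_x^T$-invariant overspace is all of $\F^m$ unless $f$ is on the list — at which point the dual would have to be everything, contradicting dimension count. All remaining computations (the block-diagonal bookkeeping, the dimension counts, the $m=1$ triviality) are routine.
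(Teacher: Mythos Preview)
Your overall plan is correct and matches the paper's: reduce the theorem to the condition $M_x^T=M_g$ for some $g\in A$ and then invoke Proposition \ref{M_x^T}. However, both halves of your equivalence ``$\F$-dual of every $A$-code is an $A$-code $\iff$ $M_x^T=M_g$'' have gaps compared with the paper's execution.

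For the easy direction you argue: $C$ is $M_x$-invariant, hence $C^{\bot_\F}$ is $M_x^T=M_g$-invariant, i.e.\ closed under multiplication by $g(x)$, ``hence again an $A$-module.'' Closure under multiplication by $g$ does not by itself give closure under multiplication by $x$; you would need $x\in\F[g]$ in $A$. (This is in fact true --- from $M_x^T=M_g$ one gets $M_x=M_g^T=g(M_x^T)=g(M_g)=M_{g(g(x))}$, so $g(g(x))=x$ --- but you have not said so.) The paper avoids this altogether by running the argument the other way: since $C$ is an $A$-module it is closed under $M_g=M_x^T$, and then for $u\in C^{\bot_\F}$ one computes $[xu(x)]C^T=uM_xC^T=u(CM_x^T)^T\subseteq uC^T=0$, so $C^{\bot_\F}$ is closed under $M_x$ directly.

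The more serious gap is in the converse. You look for a single length-$1$ code (an ideal $\langle d(x)\rangle$, or $\langle f_1(x)\rangle$) whose $\F$-dual visibly fails to be $A$-invariant, and you acknowledge this is where you expect trouble. The paper does not search for a witness at all. Instead it observes that the hypothesis gives $CM_x^T\subseteq C$ for \emph{every} $A$-code $C$. Applying this to each principal length-$1$ code $Av$ yields $vM_x^T=[v(x)g_v(x)]$ for some $g_v$ depending on $v$; the crucial step, which your length-$1$ proposals cannot supply, is to show all the $g_v$ coincide. The paper applies the invariance to the length-$2$ cyclic code $A\cdot(v,1)$: then $(v,1)M_x^T$ must equal $g\cdot(v,1)$ for a single $g$, forcing $g_v=g_1$ for every $v$, hence $M_x^T=M_{g_1}$ as matrices. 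This use of a length-$2$ code is the missing idea in your converse.
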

\begin{proof}
The case $m=1$ is trivial, so we assume that $m\geq 2$. Here if $C$ is a code of length $l$, $u\in C$ and $M$ is a
$m\times m$ $\F$-matrix, we regard $u$ as the vector $(u_1,\ldots, u_l)$ with $u_i$s in $A$ and write $uM$ for the
vector $(u_1M, \ldots, u_l M)$.

(\give): Suppose $C$ is an $A$-code and $z\in CM_x^T$. Since $C^{\bot_\F}$ is an $A$-code, we have $xu(x) \in
C^{\bot_\F}$, for each $u\in C^{\bot_\F}$. Therefore
 $$uz^T \in u\p{ CM_x^T} ^T= uM_x C^T =[xu(x)]C^T\se C^{\bot_\F}C^T=0.$$
Consequently, $CM_x^T\se C^{\bot_\F {\bot_\F}}=C$ for each $A$-code $C$. In particular, if $0\neq v\in A$, then
$vM_x^T\in Av$, where $Av$ is the ideal (or equivalently, the $A$-code of length 1) generated by $v$. This means
that $vM_x^T=[v(x)g_v(x)]$ for some $g_v(x)\in A$. If $0,v\neq v'\in A$, then $(v,v')M_x^T\in A(v,v')$. Hence for
some $g(x)\in A$, we have
 $$([g(x)v(x)], [g(x)v'(x)])= (v,v')M_x^T= (vM_x^T, v'M_x^T)= ([g_v(x)v(x)], [g_{v'}(x)v'(x)]).$$
If we apply this to $v'(x)=1$, we see that $g(x)=g_1(x)$. Therefore, for arbitrary $v(x)\in A$, we have $vM_{g_1}=
[g_1(x)v(x)]= vM_x^T$, that is $M_x^T= M_{g_1}$ and the result follows from \ref{M_x^T}.

(\rgive): By \ref{M_x^T}, there is a $g\in A$ with $M_x^T=M_g$. Since $C$ is an $A$-code, we see that $CM_g\se C$.
Therefore, if $u\in C^{\bot_\F}$, then
$$[xu(x)]C^T=uM_xC^T=u(CM_x^T)^T=u(CM_g)^T\se uC^T=0,$$ that is $xu(x)\in C^{\bot_\F}$, hence $C^{\bot_\F}$ is an $A$-code.
\end{proof}

Next we are going to find a generator matrix for $C^{\bot_\F}$ over $A$, where $A$ satisfies the conditions of
\ref{FD is A-code}. For this we need some intermediate results. Recall that if $g\in \F[x]$, then $g^R(x)$ is the
\emph{reciprocal} of $g(x)$, that is, $x^{\deg(g)} g(x^{-1})$.
\begin{lem}\label{M_g^T}
Assume that $g(x)\in A$. If $f(x)=x^2+ax-1$ for some $a\in \F$, then $M_g^T=M_g$. If $f(x)=x^m\pm 1$, then
$M_g^T=M_h$, where $h(x)= g(x^{-1})= \f{g^R(x)}{x^{\deg(g)}}$.
\end{lem}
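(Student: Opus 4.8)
The plan is to compute $M_g^T = (g(M_x))^T = g(M_x^T)$ directly, using the description of $M_x^T$ supplied by Proposition \ref{M_x^T}. In both cases $f(x)=x^2+ax-1$ and $f(x)=x^m\pm 1$ we know from (the proof of) \ref{M_x^T} that $M_x^T = M_{g_0}$ where $g_0(x) = -f_0 x^{m-1}$; here $f_0$ is the constant term of $f$, so $f_0 = -1$ in both families, giving $g_0(x) = x^{m-1}$. Hence $M_x^T = M_{x^{m-1}}$, and since $x\cdot x^{m-1} = x^m = f(x) - \sum_{i<m} f_i x^i \equiv (\text{constant}) - \sum_{0<i<m} f_i x^i$ in $A$, the element $x^{m-1}$ is exactly $x^{-1}$ in $A$ (because $x\cdot x^{m-1} = 1$ when $f(x)=x^m\pm1$ with the sign chosen as $x^m = \mp 1 \cdot(\mp1)$... more carefully: $f_0=-1$ forces $x^m = 1 + \sum_{0<i<m}(-f_i)x^i$ in $A$, and in the two relevant families the middle coefficients vanish or $m=2$, so $x\cdot x^{m-1}=1$). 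So in all cases $M_x^T = M_{x^{-1}}$ as matrices over $\F$.

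From this the lemma is immediate via the ring homomorphism property $M_{pq} = M_p M_q$ and $M_{p+q}=M_p+M_q$ (which is \cite[Proposition 6]{main}, i.e. the identity $g(x)h(x)=(gM_h)(x)$ restated): for $g(x) = \sum_i a_i x^i$,
\[
M_g^T = \Bigl(\sum_i a_i M_x^i\Bigr)^T = \sum_i a_i (M_x^T)^i = \sum_i a_i (M_{x^{-1}})^i = \sum_i a_i M_{x^{-i}} = M_{h},
\]
where $h(x) = \sum_i a_i x^{-i} = g(x^{-1})$ in $A$. When $f(x)=x^2+ax-1$ we have $x^{-1} = x + a$ and $m=2$, and one checks $g(x^{-1}) = g(x)$ in $A$ for every $g$ of degree $<2$: writing $g(x)=a_0+a_1x$, $g(x^{-1}) = a_0 + a_1(x+a) = (a_0+a_1a) + a_1 x$, which is \emph{not} literally $g(x)$ — so here I instead argue that $M_g$ is symmetric by checking it directly on the basis $\{1,x\}$, i.e. $M_1 = I$ is symmetric and $M_x = M_x^T$ by \ref{M_x^T}, hence every $M_g = a_0 I + a_1 M_x$ is symmetric. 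For $f(x)=x^m\pm1$, $x^{-1}=\pm x^{m-1}$ has the same sign issue absorbed into $h$, and $g(x^{-1}) = \sum_i a_i x^{-i}$; multiplying numerator and denominator inside $A$ by $x^{\deg g}$ gives $h(x) = g^R(x)/x^{\deg(g)}$ as claimed, since $x$ is a unit.

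The only genuinely delicate point is keeping the two families uniform: for $f=x^m\pm1$ the relation is $x^m = \mp 1$, so $x^{-1} = \mp x^{m-1}$, and the sign must be tracked through $h(x)=g(x^{-1})$; I expect this bookkeeping — plus the fact that in the $f=x^2+ax-1$ case $M_x^T=M_x$ is a genuine coincidence (not $M_{x^{-1}}$ with $x^{-1}\neq x$) that is handled by the $I, M_x$ spanning argument rather than by the substitution formula — to be the main thing to get right. Everything else is a one-line consequence of $M$ being a ring homomorphism from $A$ to matrices, which is already available from \cite[Proposition 6]{main}.
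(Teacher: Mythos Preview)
Your approach is essentially the paper's: use $M_x^T=M_{-f_0x^{m-1}}$ from \ref{M_x^T} and push it through $M_g^T=g(M_x^T)$. Two remarks, however.

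First, a factual slip: you assert ``$f_0=-1$ in both families,'' but for $f(x)=x^m+1$ one has $f_0=+1$. Your conclusion $M_x^T=M_{x^{-1}}$ is nevertheless correct in the $x^m\pm1$ cases, because $-f_0x^{m-1}$ equals $x^{-1}$ in $A$ for \emph{either} sign (if $f=x^m-1$ then $-f_0x^{m-1}=x^{m-1}=x^{-1}$; if $f=x^m+1$ then $-f_0x^{m-1}=-x^{m-1}=x^{-1}$). So the argument survives, but the justification you gave is wrong.

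Second, the paper's treatment of the $x^2+ax-1$ case is cleaner than your detour. Rather than first aiming for $M_x^T=M_{x^{-1}}$ (which fails here, as you discovered) and then falling back on ``$M_g=a_0I+a_1M_x$ is symmetric,'' the paper simply observes that $-f_0x^{m-1}=(-(-1))x^{1}=x$ in this case, so $M_g^T=M_{g(-f_0x^{m-1})}=M_{g(x)}=M_g$ directly. Writing the substitution as $g(-f_0x^{m-1})$ uniformly for both families, and only afterwards specializing $-f_0x^{m-1}$ to $x$ or to $x^{-1}$, avoids the case split you had to patch by hand.
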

\begin{proof}
In either of the cases, if $g(x)=\sum_{i=0}^{m-1} a_i x^i$, then (by the proof of \ref{M_x^T}) $$M_g^T=
\sum_{i=0}^{m-1} (M_x^T)^i= \sum_{i=0}^{m-1}(-f_0M_x^{m-1})^i= M_{h'},$$ where $h'(x)=g(-f_0x^{m-1})$. Now if
$f(x)=x^2+ax-1$, then $-f_0x^{m-1}=x$ and if $f(x)=x^m\pm 1$, then $-f_0x^{m-1}=x^{-1}$, thus $h'(x)$ is same as
$h(x)$ of the statement.
\end{proof}

Assume that $G=(g_{ij}(x))$ is a $k\times l$ matrix over $A$. As in \cite[Section 3.5]{main}, we set $\psi(G)$ and
$\zeta(G)$ to be the $km\times lm$ matrices over $\F$ defined blockwise as follows: the $ij$-th block of $\psi(G)$
is $M_{g_{ij}}$ and the $ij$-th block of $\zeta(G)$ is $M_{g_{ij}}^T$. According to \cite[Theorem 4]{main}, the
code that $\psi(G)$ generates over $\F$ is the same as the code $G$ generates over $A$ and by \cite[Theorem
5]{main}, $\zeta(G)$ generates $(C^\bot)^{\bot_\F}$ over $\F$.
\begin{cor}\label{Fdual=Adual}
For every $A$-code $C$ we have $C^\bot=C^{\bot_\F}$ \ifof $m=1$ or $f(x)=x^2+ax-1$ for some $a\in \F$.
\end{cor}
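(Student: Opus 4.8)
The plan is to compare the two matrices $\psi(G)$ and $\zeta(G)$ attached to a fixed generator matrix $G$ of an $A$-code $C$. Recall from \cite[Theorem 4]{main} that $\psi(G)$ generates $C$ over $\F$, while by \cite[Theorem 5]{main} $\zeta(G)$ generates $(C^\bot)^{\bot_\F}$ over $\F$. Since $A$ is a Frobenius ring, MacWilliams duality gives $(C^\bot)^{\bot_\F} = C^{\bot_\F\bot_\F\bot_\F}\dots$ — more directly, one has $C^{\bot_\F} = \big((C^\bot)^{\bot_\F}\big)^{\bot_\F}$ only after taking $\F$-duals, so the clean route is: $C^\bot = C^{\bot_\F}$ for every $A$-code $C$ if and only if the $\F$-code generated by $\zeta(G)$ equals the $\F$-code generated by $\psi(G)$ for every $G$, i.e. $\zeta(G)$ and $\psi(G)$ generate the same $\F$-code, because the $\F$-dual of the code generated by $\psi(G)$ is $C^{\bot_\F}$, and the $\F$-code generated by $\zeta(G)$ is $(C^\bot)^{\bot_\F}$, whose $\F$-dual is $C^\bot$. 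Hence $C^\bot=C^{\bot_\F}$ for all $C$ iff these two $\F$-codes coincide for all $G$.

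First I would dispatch the trivial direction. If $m=1$ then $A=\F$ and the $A$-dual and $\F$-dual are literally the same notion, so there is nothing to prove. If $f(x)=x^2+ax-1$, then by \ref{M_g^T} we have $M_g^T=M_g$ for every $g(x)\in A$; consequently the $ij$-th block $M_{g_{ij}}^T$ of $\zeta(G)$ equals the $ij$-th block $M_{g_{ij}}$ of $\psi(G)$, so $\zeta(G)=\psi(G)$ and the two $\F$-codes are equal, giving $C^\bot=C^{\bot_\F}$.

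For the converse, suppose $m\geq 2$ and $C^\bot=C^{\bot_\F}$ for every $A$-code $C$. Applying the argument in the proof of \ref{FD is A-code} (the $(\Rightarrow)$ part), $C^{\bot_\F}$ being an $A$-code for every $C$ already forces, via \ref{M_x^T}, that either $f(x)=x^m\pm 1$ or $m=2$ and $f(x)=x^2+ax-1$. So the only remaining case to rule out is $f(x)=x^m\pm 1$ with $m\geq 2$ (including $f(x)=x^2\pm1$). Here \ref{M_g^T} gives $M_g^T=M_h$ with $h(x)=g(x^{-1})$, which is generally different from $g(x)$. To exhibit a code where $C^\bot\neq C^{\bot_\F}$, take $l=1$ and $C=Ag$ for a well-chosen $g$: then $C^\bot = A\cdot\big(f/g \bmod f\big)$, while $C^{\bot_\F}$ is the $\F$-code generated by $\zeta((g)) = M_g^T = M_h$, i.e. the $A$-code $A h = A\, g(x^{-1})$ — wait, $\zeta$ produces the row space of $M_g^T$, which as an $A$-code is generated by the polynomial whose coefficient vector is the first row of $M_g^T$; one computes this generator explicitly and compares degrees with $f/g$. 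Concretely, picking $g(x)=x$ when $f(x)=x^m-1$ (so $g$ is a unit, $C=A$, $C^\bot=0$) is useless; instead pick $g$ with $1<\deg g<m$ and $g\mid f$, e.g. if $f=x^m-1$ factors, take $g$ a nontrivial monic divisor that is not palindromic up to units. Then $C^{\bot_\F}$ corresponds to $g(x^{-1})$-related data whose dimension matches $\deg g$ but whose generator is not a scalar multiple of $f/g$ unless $g^R$ and $f/g$ are associates, which fails for a suitably chosen non-self-reciprocal $g$.

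The main obstacle will be making this last step fully rigorous and uniform: I need, for each $f=x^m\pm1$ with $m\geq2$, to name an explicit $A$-code $C$ and verify $C^\bot\neq C^{\bot_\F}$. The cleanest uniform choice is probably to argue at the level of matrices rather than individual codes: $C^\bot=C^{\bot_\F}$ for all $C$ is equivalent (by the $\psi$/$\zeta$ dictionary above) to $M_g^T$ and $M_g$ generating the same $\F$-code for every $g$; since both are full companion-type matrices, they generate the same code iff $M_g^T = M_{g'}$ for some $g'$ with $Ag'=Ag$, and then the first-row computation in the proof of \ref{M_x^T} pins down $g'$ uniquely, forcing $M_g^T=M_g$ in the end, i.e. $h=g$ for all $g$, i.e. (by \ref{M_g^T}) $g(x)=g(x^{-1})$ in $A$ for all $g$ — plainly false when $m\geq2$ and $f=x^m\pm1$ (take $g(x)=x$: then $g(x^{-1})=x^{-1}=\pm x^{m-1}\neq x$ since $m\geq2$). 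This reduces the whole converse to that one-line contradiction, and I would present it that way.
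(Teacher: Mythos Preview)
Your sufficiency direction (the ``if'' part) is essentially the paper's: when $m=1$ it is trivial, and when $f(x)=x^2+ax-1$ you use \ref{M_g^T} to get $M_g^T=M_g$, hence $\psi(G)=\zeta(G)$, hence $(C^\bot)^{\bot_\F}=C$; taking the $\F$-dual gives $C^{\bot_\F}=C^\bot$. That is fine.

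The converse, however, has a real gap. Your matrix argument for length-one codes shows only that the row spaces of $M_g$ and $M_g^T$ coincide, i.e.\ that $Ag = A\,g(x^{-1})$ as ideals. From this you jump to ``forcing $M_g^T=M_g$'', i.e.\ $g(x^{-1})=g(x)$ as elements of $A$; that deduction is invalid, because equality of principal ideals only gives equality up to a unit. Your chosen witness $g(x)=x$ makes this visible: in $A=\F[x]/\langle x^m\pm1\rangle$ the element $x$ is a \emph{unit} (with inverse $\pm x^{m-1}$), so $Ax=A=Ax^{m-1}$ and no contradiction arises, even though $x\neq \pm x^{m-1}$ for $m\ge 3$. (For $m=2$ one even has $x^{m-1}=x$.) More seriously, the length-one strategy cannot be repaired in general: for $f(x)=x^3-1$ over $\F_2$ every divisor of $f$ is self-reciprocal up to a unit, so $Ag=A\,g(x^{-1})$ for \emph{all} $g\in A$, and every length-one code satisfies $C^\bot=C^{\bot_\F}$; yet the corollary still fails for this $f$. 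A separate small slip: you list $f(x)=x^2-1$ among the cases ``to rule out'', but $x^2-1=x^2+0\cdot x-1$ already has the form $x^2+ax-1$, so it belongs to the allowed family (and likewise $x^2+1$ in characteristic $2$).

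The paper avoids all of this by producing, for each remaining $f(x)=x^m\pm1$, an explicit length-two counterexample: it takes $C=A\cdot(1,x+1)$ and checks by hand that $(x^{m-1}+1,-1)\in C^{\bot_\F}\setminus C^\bot$ when $f=x^m-1$ (so $m>2$), and $(x^{m-1}-1,1)\in C^{\bot_\F}\setminus C^\bot$ when $f=x^m+1$ with $m>2$ or $\mathrm{char}\,\F\neq 2$. That concrete verification is what your argument is missing.
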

\begin{proof}
(\rgive): The case $m=1$ is trivial, thus assume that $f(x)=x^2+ax-1$. Let $G$ be a generator matrix for an
arbitrary $A$-code $C$. Then by \ref{M_g^T}, we have $\psi(G)=\zeta(G)$. Therefore, according to Theorems 4 and 5
of \cite{main}, $(C^\bot)^{\bot_\F}=C$ for every $A$-code $C$. Applying this with $C^\bot$ instead of $C$, we get
the desired conclusion.

(\give): Suppose $m\geq 2$ and $f(x)\neq x^2+ax-1$ for any $a\in \F$. Then by \ref{FD is A-code}, we should have
$f(x)=x^m\pm 1$. Assume that $f(x)=x^m-1$. Then $m>2$. Consider the code $C$ generated by $(1, x+1)$ over $A$ or
equivalently generated by the vectors $$([1], [x+1]), ([x], [x^2+x]), \ldots, ([x^{m-1}], [x^{m-1}(x+1)])$$ over
$\F$. Then it is routine to check that the dot product of the vector $([x^{m-1}+1],[-1])$ with any of the above
$\F$-generators of $C$ is zero, that is, $(x^{m-1}+1, -1)\in C^{\bot_\F}$ but $$(x^{m-1}+1, -1)(1,
x+1)=x^{m-1}-x\neq 0,$$ that is, $(x^{m-1}+1, -1)\notin C^{\bot}$, a contradiction.

Now assume that $f(x)=x^m+1$. If $m=2$ and char $\F=2$, then $f(x)$ is in the required form. Thus we can assume
that either $m>2$ or char $\F\neq 2$. Again consider the code $C$ generated by $(1, x+1)$ over $A$. Using these
assumptions one can readily verify that this time $(x^{m-1}-1, 1)\in C^{\bot_\F}\sm C^{\bot}$.
\end{proof}

\begin{prop}\label{zeta(G)}
Suppose that $f(x)=x^m\pm 1$ and assume that rows of a $k\times l$ matrix $G$ form a basis of divisors for an
$A$-code $C$. Let $G'= (\alpha_i^{-1} x^{d_i} g_{ij}(x^{-1}))$, where $d_i=\deg(\lc(g^{(i)}))$ and $\alpha_i$ is
the constant coefficient of $\lc(g^{(i)})$. Then rows of $G'$ form a basis of divisors for the $A$-code
$C'=(C^{\bot})^{\bot_\F}$.
\end{prop}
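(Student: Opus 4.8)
The plan is to first realize $C'=(C^{\bot})^{\bot_\F}$ concretely as an $A$-code with an explicit generator matrix, and then verify that the rows $g'^{(1)},\ldots,g'^{(k)}$ of $G'$ meet the criteria of \ref{div bas char}. Since $f(x)=x^m\pm1$, the element $x$ is a unit of $A$ (indeed $x^{-1}=\mp x^{m-1}$), so the substitution map $\sigma\colon A\to A$, $\sigma(g(x))=g(x^{-1})$, is a ring automorphism (it fixes $\F$, sends the unit $x$ to its inverse, and $\sigma^2=\mathrm{id}$). By \ref{M_g^T} the $ij$-block $M_{g_{ij}}^T$ of $\zeta(G)$ equals $M_{g_{ij}(x^{-1})}$, so $\zeta(G)=\psi(H)$ with $H=\p{\sigma(g_{ij})}$; combining Theorems 4 and 5 of \cite{main} then shows that $C'$ is precisely the $A$-code generated by the rows of $H$. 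For each $i$ the scalar $\alpha_i$ is nonzero, because $\lc(g^{(i)})\mid f$ while $x\nmid f=x^m\pm1$, and $x^{d_i}$ is a unit, so $\alpha_i^{-1}x^{d_i}$ is a unit of $A$; hence the rows of $G'$, being the unit multiples $\alpha_i^{-1}x^{d_i}\cdot(\text{$i$-th row of }H)$, also generate $C'$, and $G'$ is a generator matrix for $C'$.

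Next I would check conditions \ref{div bas char 1} and \ref{div bas char 2} for $G'$ and monicity of its rows. Since $\sigma$ is an automorphism and $\alpha_i^{-1}x^{d_i}$ is a unit, the $j$-th entry of $g'^{(i)}$ vanishes exactly when $g_{ij}=0$; hence $\li(g'^{(i)})=\li(g^{(i)})=:l_i$, and these strictly increase with $i$ because $G$ is in echelon form --- this is \ref{div bas char 1}. Writing $\lc(g^{(i)})=g_{i,l_i}(x)$, a monic polynomial of degree $d_i$ with nonzero constant term $\alpha_i$, the leading entry (at position $l_i$) of $g'^{(i)}$ is $\alpha_i^{-1}x^{d_i}g_{i,l_i}(x^{-1})=\alpha_i^{-1}g_{i,l_i}^R(x)$; here the identity $x^{d_i}g_{i,l_i}(x^{-1})=g_{i,l_i}^R(x)$ already holds in $\F[x]$ since all exponents involved lie between $0$ and $d_i<m$. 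As $g_{i,l_i}^R$ has degree $d_i$ and leading coefficient $\alpha_i$, the polynomial $\alpha_i^{-1}g_{i,l_i}^R$ is monic of degree $d_i$, so $G'$ is monic and $\deg(\lc(g'^{(i)}))=d_i$. Finally, reciprocation is multiplicative on polynomials with nonzero constant term and $f^R=\pm f$, so $g_{i,l_i}\mid f$ forces $g_{i,l_i}^R\mid f$; thus $\lc(g'^{(i)})\mid f$, which is \ref{div bas char 2}.

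For the remaining condition I would use a dimension count rather than \ref{div bas char 3}. By \cite[Proposition 2]{main}, $\dim_\F C^{\bot}=lm-\dim_\F C$, hence $\dim_\F C'=\dim_\F(C^{\bot})^{\bot_\F}=lm-\dim_\F C^{\bot}=\dim_\F C$; and since the rows of $G$ form a basis of divisors of $C$ we have $\dim_\F C=\sum_{i=1}^k(m-d_i)$, which by the previous paragraph equals $\sum_{i=1}^k\p{m-\deg(\lc(g'^{(i)}))}$. This is exactly condition \ref{div bas char 3'}, so the ``moreover'' part of \ref{div bas char} yields that $g'^{(1)},\ldots,g'^{(k)}$ form a basis of divisors of $C'$, as claimed.

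I expect the only real care to be needed in the middle step: getting the bookkeeping identity $x^{d_i}g(x^{-1})=g^R(x)$ right, confirming that the normalizing unit $\alpha_i^{-1}x^{d_i}$ turns the leading entry into a monic polynomial of the correct degree $d_i$, and noting that reciprocation respects divisibility by $f$ (via $f^R=\pm f$). These are routine but are precisely where a dropped scalar or an off-by-one in the exponents could break the argument; the identification of $C'$ and the final dimension count are essentially immediate from the cited results.
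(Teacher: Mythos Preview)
Your proposal is correct and follows essentially the same route as the paper's proof: identify $\zeta(G)=\psi(H)$ via \ref{M_g^T}, invoke Theorems~4 and~5 of \cite{main} to see that $H$ (and hence the unit-rescaled $G'$) generates $C'$, and then verify conditions \ref{div bas char 1}, \ref{div bas char 2}, and \ref{div bas char 3'} of \ref{div bas char} using $f^R=\pm f$ and the degree-preserving reciprocal. The paper is merely terser on the echelon-form and dimension checks, but the argument is the same.
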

\begin{proof}
Let $G''=(g_{ij}(x^{-1}))$. Then as $x$ is invertible in $A$, $G'$ and $G''$ generate the same code. Now
$\psi(G'')=\zeta(G)$ by \ref{M_g^T}. Thus by \cite[Theorem 5]{main} $\psi(G'')$ generates $C'$ over $\F$ and hence
by \cite[Theorem 4]{main} $G''$ and hence $G$ generate $C'$ over $A$. Let $h_i(x)=\lc(g^{(i)})$, then
$h_i(x)|f(x)$. Now since $\lc\p{ g'^{(i)}}= \alpha_i^{-1} h_i^R(x)$ and $f^R(x)=\pm f(x)$, we see that $\lc\p{
g'^{(i)}} |f(x)$. Thus \ref{div bas char}\ref{div bas char 2} holds and obviously \ref{div bas char}\ref{div bas
char 1} also holds. Moreover, since $h_i(x)|f(x)$, we see that $h_i(0)\neq 0$ and thus
$\deg(h_i(x))=\deg(h_i^R(x))$. Therefore, $$\sum_{i=1}^k m-\deg\p{ \lc\p{ g'^{(i)}} } = \sum_{i=1}^k m-\deg\p{
\lc\p{ g^{(i)}} } =\dim_\F C= \dim_\F C'$$ and \ref{div bas char 3'} of \ref{div bas char} holds and the result
follows.
\end{proof}

The matrix $G'$ constructed above need not be a CGM, even if the initial $G$ is a CGM for $C$, as the following
example shows.
\begin{ex}\label{not can ex}
Let $f(x)=x^3-1$ and $C$ be the $A$-code with CGM $\mat{cc}{x^2+x+1 & -1 \cr 0 & x-1}$. Then $G'=\mat{cc}{x^2+x+1
& -x^2 \cr 0 & x-1}$ which is not a CGM. Indeed the CGM for $C'$ is $G$ itself and $C=C'$ in this case.
\end{ex}

We say that rows of a matrix $G$ is a \emph{reverse basis of divisors} for an $A$-code $C$, when the rows of the
matrix obtained by reversing the order of both rows and columns of $G$ (as in \ref{rev canon}) are a basis of
divisors for $C^R$. For example, rows of $H$ in \ref{rev canon} form a reverse basis of divisors for $C^\bot$.
\begin{cor}
Assume that $A$ satisfies the conditions of \ref{FD is A-code}. Suppose that $H=(h_{ij}(x))$ is the matrix
obtained by Algorithm \oldref{alg1} for the $A$-code $C$. If $m=1$ or $f(x)=x^2+ax-1$, then rows of $H$ form a
reverse basis of divisors for $C^{\bot_\F}$. If $f(x)=x^m\pm 1$, then rows of $H'=(\alpha_i^{-1} x^{d_i}
h_{ij}(x))$ form a reverse basis of divisors for $C^{\bot_\F}$, where $d_i$ is the degree of the last nonzero
entry on the $i$-th row of $H$ and $\alpha_i$ is the constant coefficient of this entry.
\end{cor}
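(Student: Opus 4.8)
The plan is to split along the two cases of \ref{FD is A-code} and reduce each to results already available: \ref{rev canon} (the matrix $H$ returned by Algorithm \oldref{alg1} becomes, after reversing the order of its rows and columns, the CGM of $C^{\bot R}$), \ref{Fdual=Adual} (which says exactly when the $\F$-dual and the $A$-dual coincide), and \ref{zeta(G)} (which turns a basis of divisors of an $A$-code $D$ into a basis of divisors of $(D^\bot)^{\bot_\F}$ via the coordinatewise substitution $x\mapsto x^{-1}$ together with a normalization). As in \ref{rev canon}, I assume throughout that the zero rows of $H$ have been deleted, so that ``the last nonzero entry of the $i$-th row of $H$'' is meaningful; no separate analysis of the branches of Algorithm \oldref{alg1} is needed, as everything is routed through \ref{rev canon}.

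If $m=1$ or $f(x)=x^2+ax-1$, then by \ref{Fdual=Adual} we have $C^{\bot_\F}=C^\bot$ for every $A$-code $C$, and the assertion is just a restatement of \ref{rev canon}: by definition the rows of $H$ form a reverse basis of divisors of $C^\bot$ precisely because the rows of $H^R$ form a basis of divisors of $(C^\bot)^R=C^{\bot R}$ (in fact $H^R$ is its CGM), and $C^\bot=C^{\bot_\F}$.

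For $f(x)=x^m\pm1$ I would proceed as follows. First record two routine facts about $A$-codes: $(E^R)^\bot=(E^\bot)^R$, which is a one-line computation from the definition of the $A$-dual; and $(C^R)^{\bot_\F}=(C^{\bot_\F})^R$, which holds because reversing the order of the $l$ coordinates of $A^l$ is a permutation of the $lm$ underlying $\F$-coordinates and hence preserves the standard inner product (one also uses here, via \ref{FD is A-code}, that $C^{\bot_\F}$ is again an $A$-code, so that its reciprocal is defined). Since $C^{\bot R}=(C^\bot)^R$ by definition, the first fact gives $(C^{\bot R})^\bot=((C^\bot)^R)^\bot=((C^\bot)^\bot)^R=C^R$, whence $\bigl((C^{\bot R})^\bot\bigr)^{\bot_\F}=(C^R)^{\bot_\F}=(C^{\bot_\F})^R$ by the second fact. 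Now apply \ref{zeta(G)} to the code $D=C^{\bot R}$: by \ref{rev canon}, $H^R$ is the CGM of $D$, in particular its rows are a basis of divisors of $D$, so the matrix with $(i,j)$ entry $\alpha_i^{-1}x^{d_i}(H^R)_{ij}(x^{-1})$ --- where $d_i$ and $\alpha_i$ are the degree and the constant term of the leading entry of the $i$-th row of $H^R$ --- has rows forming a basis of divisors of $(D^\bot)^{\bot_\F}=(C^{\bot_\F})^R$. Reversing the order of both the rows and the columns of this matrix then yields, by the definition of a reverse basis of divisors, a reverse basis of divisors of $C^{\bot_\F}$, and an explicit index computation identifies this matrix with $H'$.

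The delicate point is that last index computation. By construction of $H^R$, the leading entry of its $i$-th row is the \emph{last} nonzero entry of the $(k-i+1)$-th row of $H$, so one must follow carefully how the normalizing data $d_i$, $\alpha_i$ and the substitution $x\mapsto x^{-1}$ are relabelled by the two reversals (of rows, then of columns) before they match the description of $H'$ in the statement. Two subsidiary points should be settled en route: each $\alpha_i$ is a nonzero scalar, since the leading coefficient of a row of $H^R$ divides $f$ and $f(0)=\pm1\neq0$ forces every monic divisor of $f$ to have nonzero constant term; and $x$ is a unit of $A$ in this case, which is precisely what makes the substitution $x\mapsto x^{-1}$ (hence the whole recipe of \ref{zeta(G)}) legitimate in $A$.
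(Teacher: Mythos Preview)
Your proposal is correct and follows essentially the same approach as the paper: handle the first case via \ref{Fdual=Adual} and \ref{rev canon}, and for $f(x)=x^m\pm1$ apply \ref{zeta(G)} with $G=H^R$ to obtain a basis of divisors for $((C^{\bot R})^\bot)^{\bot_\F}=(C^R)^{\bot_\F}=(C^{\bot_\F})^R$, then reverse to get $H'$. You are more explicit than the paper in justifying the two identities $(E^R)^\bot=(E^\bot)^R$ and $(C^R)^{\bot_\F}=(C^{\bot_\F})^R$ and in checking that $\alpha_i\neq0$ and that $x$ is a unit, but these are details the paper's shorter proof simply leaves implicit.
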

\begin{proof}
In the first case clearly $C^{\bot_\F}=C^\bot$ by \ref{Fdual=Adual}, so assume that $f(x)=x^m\pm 1$. Let $H^R$ be
as in \ref{rev canon}. If we apply \ref{zeta(G)} with $G=H^R$, then $G'=H'^R$ and $C'=((C^{\bot R})^\bot)
^{\bot_\F}=(C^R)^{\bot_\F}= (C^{\bot_\F})^R$ (note that in all terms, we are taking reciprocal of codes as
$A$-codes not $\F$-codes). Therefore, the rows of $H'^R$ form a basis of divisors for $(C^{\bot_\F})^R$, as
required.
\end{proof}
Note that although $H^R$ above is indeed a CGM for $C^{\bot R}$, but $H'^R$ need not be a CGM. For example, if
$f(x)=x^3-1$ and $C$ is the code with CGM $\mat{cc}{x^2+x+1 & 1\cr 0& x-1}$, then $H^R$ and $H'^R$ are $G$ and
$G'$ in Example \ref{not can ex}, respectively. Also if we apply this corollary for example to the code generated
by $(1, x+1)$ with $f(x)=x^m+1$, we see that $C^{\bot_\F}$ is generated by $(x^{m-1}-1, 1)$ and is different from
$C^\bot$ which is generated by $(-x-1,1)$.

{\large\bf Acknowledgement} \\
The author would like to thank Prof. H. Sharif of Shiraz University for his nice comments and discussions. Also
the author thanks the reviewers for careful reading of the paper and their valuable comments.


\end{document}